\documentclass{article}

\usepackage{arxiv}

\usepackage[utf8]{inputenc}
\usepackage[T1]{fontenc}
\usepackage{hyperref}
\usepackage{url}
\usepackage{booktabs}
\usepackage{amsfonts}
\usepackage{microtype}
\usepackage{graphicx}
\usepackage{natbib}
\usepackage{doi}

\usepackage{amsmath}
\usepackage{amsthm}
\usepackage{mathtools}
\usepackage{dsfont}
\usepackage{array}
\usepackage{subcaption}
\usepackage{float}
\usepackage{longtable}
\usepackage{algorithm}
\usepackage{algpseudocode}
\usepackage{appendix}
\usepackage{adjustbox}
\usepackage{xcolor}
\usepackage{hhline}
\usepackage{rotating}
\usepackage{wrapfig}
\usepackage{sidecap}

\allowdisplaybreaks[2]

\theoremstyle{definition}
\newtheorem{definition}{Definition}[section]
\newtheorem{theorem}{Theorem}[section]
\newtheorem{corollary}{Corollary}[theorem]
\newtheorem{lemma}[theorem]{Lemma}
\newtheorem{Proposition}{Proposition}[section]

\theoremstyle{remark}

\title{Renewing Reliability: Valuation and Credit Risk Adjustments for Renewable Power Purchase Agreements}

\author{
Nicola Bartolini\thanks{Corresponding author: nicola.bartolini11@unibo.it}\\
Department of Statistics\\
University of Bologna\\
Bologna, Italy\\
\texttt{nicola.bartolini11@unibo.it}
\And
Silvia Romagnoli\\
Department of Statistics\\
University of Bologna\\
Bologna, Italy
\And
Amia Santini\\
Department of Statistics\\
University of Bologna\\
Bologna, Italy
}

\hypersetup{
pdftitle={Renewing Reliability: Valuation and Credit Risk Adjustments for Renewable Power Purchase Agreements},
pdfauthor={Nicola Bartolini, Silvia Romagnoli, Amia Santini},
pdfkeywords={Power Purchase Agreements, Renewable Energy, Risk Management, Monte Carlo Simulation, Electricity Markets}
}

\begin{document}
\maketitle

\begin{abstract}
	Power Purchase Agreements (PPAs) are bilateral over-the-counter contracts central to renewable energy financing. While their capacity to stabilise revenues and hedge price risk is well recognised, their OTC structure exposes both parties to counterparty credit risk. This is a dimension yet to be explored in the literature, particularly given the dual price and volumetric uncertainty inherent in renewable sources. This paper develops a framework for the pricing and valuation of wind power PPAs and for quantifying this risk through Credit Valuation Adjustment (CVA) and Debit Valuation Adjustment (DVA). We model the joint dynamics of electricity spot prices and renewable output, incorporate default probabilities, and compute valuation adjustments that reflect the fair value of bilateral credit risk. The framework provides market participants with a transparent metric for PPA valuation under counterparty risk. While initiatives such as the European Investment Bank's pilot guarantee scheme aim to mitigate credit risk for certain offtakers, such interventions do not cover all PPA transactions. Rigorous internal credit risk assessment therefore remains indispensable for lenders, producers, and offtakers alike.
\end{abstract}

\keywords{Power Purchase Agreements \and renewable energy \and counterparty risk \and valuation adjustments \and credit risk}

\section{Introduction}
A Power Purchase Agreement (PPA) constitutes a contractual arrangement between an electricity generator and a consumer, referred to respectively as the producer and offtaker. Under this arrangement, the offtaker commits to acquiring specified volumes of electricity from the producer according to predetermined schedules and pricing structures. These agreements exhibit considerable variation across several key dimensions, including the nature of the participating entities (whether private sector or public institutions), the classification of the energy source (renewable versus conventional), and the chosen settlement methodology (physical delivery or financial instruments, \cite{Hundt202135}). The growing scholarly and industrial attention devoted to PPAs stems from their capacity to mitigate the inherent uncertainties associated with renewable energy generation, thereby potentially accelerating the deployment and advancement of these technologies (\cite{POMBOROMERO2024107861}, \cite{baringa2022commercial}).
\\
Our investigation centers specifically on commercial PPAs, which differ from government-backed arrangements and involve renewable energy sources. These agreements may be structured either as physical or financial (commonly termed virtual) contracts. Under physical arrangements, electricity flows directly from the production facility to the purchaser via the transmission network, with the consumer receiving regular invoices reflecting the contractually stipulated price. Conversely, financial PPAs operate through a different mechanism: the renewable generator participates in wholesale electricity markets, selling output at prevailing spot prices. Simultaneously, a financial settlement occurs based on the differential between the agreed PPA price and the market spot price. When spot prices fall beneath the contractual threshold, the buyer compensates the generator; the payment direction reverses when market prices exceed the PPA price.
\\
A fundamental characteristic of PPAs, particularly those of a commercial nature, is that they are typically negotiated and traded as bilateral over-the-counter (OTC) contracts rather than on organised exchanges. As highlighted in analyses of electricity trading, this OTC structure inherently exposes both parties to counterparty credit risk, i.e. the possibility that one party may default on its obligations before the contract matures. While exchange-traded contracts benefit from the risk mitigation provided by a central counterparty, OTC agreements like PPAs require direct and sophisticated management of this bilateral risk.
\\
In the aftermath of the global financial crisis, significant advances have been made in the quantitative modelling of such bilateral risks. The foundational work of \cite{brigo2013counterparty} provides a comprehensive analytical framework for the dynamic valuation and hedging of bilateral counterparty risk on OTC derivative contracts under funding constraints. Within this framework, two key valuation adjustments have emerged as essential components of fair valuation: the Credit Valuation Adjustment (CVA), which quantifies the expected loss arising from the counterparty's potential default, and the Debit Valuation Adjustment (DVA), which reflects the gain from the entity's own potential default (\cite{green2015xva}). As Turnbull notes in his comprehensive review, estimating these quantities requires careful modelling of default probabilities, loss given default, and the dependence structure among these inputs. However, the application of these sophisticated credit risk metrics to long-term renewable energy contracts remains an underexplored area, particularly given the unique dual risks (price and volume) that characterise these agreements (\cite{PENA2024101513}).
\\
Recognising the barrier that creditworthiness poses to PPA adoption, particularly for smaller corporate offtakers, the European Investment Bank (EIB) has recently launched a pilot guarantee scheme as part of its contribution to the Clean Industrial Deal. Announced in early 2025, this initiative provides counter-guarantees for corporate PPAs with an indicative budget of EUR 500 million\footnote{European Investment Bank, ``PAN-EU POWER PURCHASE AGREEMENT GUARANTEE", Project No. 20250202, approved 19 June 2025. \url{https://www.eib.org/en/projects/all/20250202}}. The scheme operates by having the EIB back a guarantee provided by a third-party financial intermediary, thereby absorbing part of the credit risk that has historically discouraged lenders and energy suppliers from engaging with non-investment grade stakeholders. By supporting midcaps and larger corporates in signing guarantees on corporate PPAs, the EIB aims to help them access green energy supply at long-term fixed prices, secure demand for renewable energy producers, and meet lenders' requirements.
\\
Despite the introduction of such guarantee schemes, the quantitative assessment of counterparty risk in PPAs (and consequently, the evaluation of CVA and DVA) remains critically important for several reasons. Firstly, guarantee schemes do not eliminate risk but rather transfer it, which means that understanding the underlying exposure is essential for pricing such guarantees correctly and for managing the residual risk retained by commercial banks and energy suppliers. Second, the valuation of CVA and DVA provides market participants with a transparent, risk-adjusted metric for comparing the economic trade-offs between OTC PPAs and other hedging instruments. Third, as the PPA market matures and expands to include a wider array of participants, including SMEs through potential multi-buyer structures, the ability to quantify bilateral credit risk will become indispensable for negotiating contract terms, determining collateral requirements, and ensuring the long-term stability of renewable energy investments (\cite{POMBOROMERO2024107861}). 
\\
The present work addresses this gap by developing a framework for the pricing and valuation of different types of commercial renewable energy PPAs and for assessing their counterparty credit risk, adapting tools that remain essential even as public policy initiatives evolve to support market development. The paper is structured as follows: Section \ref{sec:literature} presents a literature review about PPAs and highlights the contributions of the present work, Section \ref{sec:general_framework} introduces the general framework, Section \ref{sec:counterparty_credit_risk} illustrates and develops the computation of Counterparty Credit Risk, Section \ref{sec:empirical} provides an empirical application, and Section \ref{sec:policy} goes over the use of this framework within the current EIB policies. Finally, Section \ref{sec:conclusions} concludes.

\section{Literature Review}\label{sec:literature}
The literature on renewable Power Purchase Agreements (PPAs) has developed along multiple paths. \cite{TAHERI2025256} focus on their role for power procurement, comparing portfolios hat sign physical PPAs with ones that sign virtual PPAs. They restrict their study to fixed-volume contracts and emphasize the uncertainty coming from power demand and electricity prices. They tackle the procurement problem by formulating a Markov decision process with multi-stage optimization that reaches and sustains a renewable procurement target.

\cite{MITTLER2025115293} compile a comprehensive overview of existing renewable PPA types. They categorize them by their main features and expand their literature review by including first-hand expert knowledge obtained from interviews with industry players. We rely on their categorization to select the baseline framework that lies at the basis of our study.

The pricing problem of commercial PPAs has been approached from two perspectives: the first, presented in \cite{MENDICINO2019113577}, relies on a Levelized Cost of Energy formula (LCOE), a tool from engineering economics, which is used to compare the lifetime costs of capital-intensive projects. The authors use it to develop a methodology to define optimal contract length and price, finding the former to be betweenn 7 and 10 years and the latter between 75€/MWh and 100€/MWh.
The second approach to pricing is introduced in \cite{PENA2026125168} and is of a financial nature. The authors perform risk-neutral pricing for the PPA, by setting it equal to the fixed payment that would make the discounted risk-neutral expectation of the payoff equal to zero. They then allow for differences in the actual PPA price, due to the influence of renewable energy credits, guarantees of origin, capacity payments, or periodic fees for operation and maintenance costs. They show that financial-based pricing approaches outperform the cost-based ones on historical data. In our work, we extend the financial-based pricing approach to tackle a novel side of the pricing problem, i.e. accounting for the counterparty risk intrinsic to PPA contracts. We offer a computational framework for finding Credit and Debit Valuation Adjustments for wind power PPAs, which can be used for pricing at contract inception, for risk management, during the contract lifetime, and for policymaking, when defining the terms of guarantee schemes.

The next strand of literature focuses on the risks associated with PPAs. \cite{PENA2024101513} focus on the hedging perspective: using monthly and yearly exchange-traded electricity futures, they find an overall low or negative hedging effectiveness and inconsistent tail risk reductions. This indicates that traditional linear hedging strategies could be insufficient for risk mitigation of the the joint price and production risks of PPAs. 
\cite{POMBOROMERO2024107861} focus instead on credit risk and propose a RE PPA assessment model which involves the main drivers of value and risk for the offtaker: cost and volatility reductions, compared to the electricity market. 
Credit risk is then defined via a structural model and as a function of the probability of default of the offtaker, which is in turn identified in the probability that the value of the PPA becomes negative for them. The main driver of value for the offtaker is the the difference between the cost of energy in the market and via the PPA.
In the present study, we take a financial approach and assume, as in \cite{PENA2026125168}, that the contract value is zero at inception due to the appropriate choice of the fixed PPA price and then (similarly to a swap agreement) changes in sign overtime. We define counterparty credit risk within a CVA and DVA framework, in which default probabilities are intensity-based and are not functions of the sign of the expected contract payoff.

\section{The General Framework}
\label{sec:general_framework} 
\label{sec:pricing}
We consider an offtaker entering a renewable energy PPA at time $t$ (contract execution date) with a producer. The contract follows the standard pay-as-produced structure for renewable energy: the offtaker agrees to purchase all (or a fixed percentage) of the electricity generated by the producer at a fixed price per MWh. In some cases, the price may be time-dependent, reflecting factors such as escalating operations and maintenance costs due to equipment aging \cite{PENA2024101513}. This arrangement introduces volumetric risk, as the quantity to be delivered is unknown at contract signature.

Settlement begins at time $T_1$ (commercial operating date) and ends at time $T=T_n$, with $n$ payment periods. On each period, the offtaker purchases the realised output at a fixed agreed-upon price and may sell it on the market at the prevailing spot price. The offtaker therefore faces both volumetric and price risk: neither the quantity nor the market price is known in advance.
\\
Existing PPA types are very heterogeneous, with differences in contract structure, delivery profile, pricing indexation, settlement frequency, tenor, and credit support mechanisms (\cite{wbcsd_pricing_2021_alt}). In order to proceed with this work, a contract definition is required. We begin by focusing on a contract which has a constant price $K$.

\begin{definition}
    \label{def:PPA_math}
    \textbf{Payoff of a PPA} \newline
    The payoff to the offtaker of a PPA contract signed at time $t$ with maturity $T=T_n$, delivery and settlement periods $T_j$, and a fixed price $K$ is defined as 
    \begin{equation*}
        \Pi(T_1, T) = \sum_{j=1}^n \Pi(T_j) =  \sum_{j=1}^n Q(T_j)\big(S(T_j) - K\big)
    \end{equation*}
    where $\Pi(T_j)$ is the i-th payoff at time $T_j$, $Q(T_j)$ is the quantity of electricity produced at $T_j$ and $S(T_j)$ is the price of electricity at $T_j$.
\end{definition}

\begin{definition}
    \label{def:PPA_discount}
     \textbf{Discounted payoff of a PPA} \newline
    The discounted payoff to the offtaker of a PPA contract at time $t$, with fixed price $K$, maturity $T$ and settlement and delivery periods $T_j$ is 
    \begin{equation}
        \label{eq:PPA_discount}
        \Pi(t, T_1,T) = \sum_{i=1}^n \Pi(T_j)D(t,T_j) =  \sum_{i=1}^n Q(T_j)\big(S(T_j) - K\big)D(t,T_j)
    \end{equation}
    where $D(t, T_j)$ is the discount factor from time $T_j$ to time $t$.
\end{definition}

\begin{definition} 
\label{def:UPPA}
\textbf{The Unitary Power Purchase Agreement (UPPA)}\newline Let $(\Omega, \mathcal{F}, (\mathcal{F}_t)_{t\ge 0}, \mathbb{P})$ be a complete filtered probability space. We define the value at time $t$ of a Unitary Power Purchase Agreement (UPPA) for maturity $T$ as 
\begin{equation}
    \label{eq:UPPA}
    UPPA(t, T) = \mathbb{E}^\mathbb{Q}[\Pi(T)D(t,T)|\mathcal{F}_t] = \mathbb{E}^\mathbb{Q}[Q(T)\big(S(T) - K\big)D(t,T)|\mathcal{F}_t],
\end{equation}
where $\mathbb{Q}$ is the risk-neutral measure equivalent to $\mathbb{P}$.
\end{definition}
By using definition \ref{def:UPPA} inside Eq. \eqref{eq:PPA_discount}, we then have that
\begin{equation*}
    \mathbb{E}^\mathbb{Q}[\Pi(t,T_1,T)|\mathcal{F}_t] = \sum_{j=1}^n UPPA(t, T_j) = \sum_{j=1}^n \mathbb{E}^\mathbb{Q}[Q(T_j)\big(S(T_j) - K\big)D(t,T_j)|\mathcal{F}_t]
\end{equation*}
As a consequence, the discounted expected payoff at time $t$ of the PPA is the sum of the individual UPPA contracts. In the next sections, the UPPAs will act as tool to define the valuation of a PPA from a financial mathematics perspective aligned with the economic interpretation of discounted payoffs.


\subsection{Risk-neutral valuation and pricing}\label{sec:risk_neutral}
We begin by introducing a definition for the risk-neutral fixed price $K$ of a PPA contract, conceptually aligned with \cite{PENA2026125168}.
\begin{definition}\label{def:price}
\textbf{Risk-neutral price of a PPA}\newline
The risk-neutral price of a PPA contract is the constant payment $K \in \mathbb{R}$ that makes the discounted risk-neutral conditional  expectation of the PPA payoff equal to zero, i.e.
\begin{equation*}
    K = \frac{\sum_{j=1}^n \mathbb{E}^\mathbb{Q}[Q(T_j)S(T_j) D(t,T_j)|\mathcal{F}_t]}{\sum_{j=1}^n \mathbb{E}^\mathbb{Q}[Q(T_j) D(t,T_j)|\mathcal{F}_t]}.
\end{equation*}
\end{definition}
Given the absence of liquid financial derivatives contracts on $Q(T)$ or on the risk factors associated with it, our framework, as \cite{PENA2026125168}, assumes $\mathbb{Q} = \mathbb{P}$ for the produced renewable energy quantity.

We next focus on PPA contracts for wind power production. In such a setting, the amount of electricity produced in any period $T_j$, $Q(T_j)$, is a function of $W(T_j)$, the average wind speed for that period. We therefore denote it as $Q(T_j, W(T_j))$.

Following \cite{risks6020056}, we assume that the link between wind speed and power production is captured by 
\begin{equation}\label{eq:wind_power_prod}
Q(T,W(T)) = \alpha W(T)^3 \mathbf{1}_{\{W(T) \in [a,b]\}},
\end{equation}
where $\alpha>0$ is an efficiency production factor, $a$ is the lowest wind speed to achieve electricity production, and $b$ is the highest.
The definition of the risk-neutral price of a wind power PPA can then be updated. Differently from \cite{PENA2026125168}, in our work quantity is an explicit function of wind speed $W(t)$. As a consequence, by taking $\mathbb{Q} = \mathbb{P}$, we are assuming a zero market price of risk for wind speed and therefore also for its role in impacting electricity prices.

\begin{definition}\label{def:price_wind}
\textbf{Risk-neutral price of a wind power PPA}\newline
The risk-neutral price of a wind power PPA contract is the constant payment $K \in \mathbb{R}$ which makes the discounted risk-neutral expectation of the PPA payoff equal to zero, i.e.
\begin{equation}
    \label{eq:price_k_of_wind_ppa}
    K = \frac{\sum_{j=1}^n \mathbb{E}^\mathbb{Q}[\alpha W(T_j)^3 \mathbf{1}_{\{W(T_j) \in [a,b]\}}S(T_j) D(t,T_j)|\mathcal{F}_t]}{\sum_{j=1}^n \mathbb{E}^\mathbb{Q}[\alpha W(T_j)^3 \mathbf{1}_{\{W(T_j) \in [a,b]\}}D(t,T_j)|\mathcal{F}_t]}.
\end{equation}
\end{definition}

The discounted expected payoff then also becomes a function of time, of wind speed $W(t)$ and of $S(t)$. 
\begin{definition}
     \textbf{Discounted expected payoff of a wind power PPA} \newline
    The discounted expected payoff, from the perspective of the offtaker, of a wind power PPA signed at time $t$, with fixed price $K$, maturity $T$ and settlement and delivery periods $T_j$ is 
    \begin{align}
    \nonumber
    \mathbb{E}^\mathbb{Q}[\Pi(t,T,W(t),S(t))|\mathcal{F}_t] & = \sum_{j=1}^n UPPA(t, T_j, W(t), S(t)) 
    \\
    &=  \sum_{j=1}^n \mathbb{E}^\mathbb{Q}[Q(T_j, W(T_j))\big(S(T_j) - K\big)D(t,T_j)|\mathcal{F}_t] \notag \\
    &=  \sum_{j=1}^n \alpha \mathbb{E}^\mathbb{Q}[ W(T_j)^3 \mathbf{1}_{\{W(T_j) \in [a,b]\}}S(T_j)]D(t,T_j)|\mathcal{F}_t] \notag \\ 
    &\quad - \alpha K \mathbb{E}^\mathbb{Q}[ W(T_j)^3 \mathbf{1}_{\{W(T_j) \in [a,b]\}} D(t,T_j)|\mathcal{F}_t]. \label{eq:pricing_wind_ppa_definition}
\end{align}
\end{definition}

A risk-neutral approach is however not entirely appropriate for this setting. In fact, while exchange-traded contracts benefit from the risk mitigation provided by a central counterparty, PPAs are mainly OTC agreements. As a consequence, counterparty credit risk must be explicitly accounted for. We therefore begin by finding formulas for risk-neutral pricing and payoff valuation (Sections \ref{sec:gaussian_model} and \ref{sec:jump_model}) and then move to the Credit Valuation Adjustment (CVA) and the Debit Valuation Adjustment (DVA) of the wind PPA contract (Section \ref{sec:counterparty_credit_risk}). Then, Section \ref{sec:K_with_BVA} provides a definition of the counterparty-risk-adjusted price of a wind power PPA.

In order find solutions for Eq. \eqref{eq:price_k_of_wind_ppa} and \eqref{eq:pricing_wind_ppa_definition}, we first need to introduce a model for the two underlying stochastic processes, $W(t)$ and $S(t)$. We consider two alternatives: initially, a more tractable Gaussian model for both, which is illustrated in Section \ref{sec:gaussian_model} and leads to closed-form formulas. Next, a more realistic model with jump-diffusion Ornstein-Uhlenbeck processes, which is illustrated in Section \ref{sec:jump_model} and leads to semi-analytical formulas.

\subsubsection{PPA Pricing and payoff valuation with the Gaussian model}\label{sec:gaussian_model}
The first approach assumes the following model for the wind speed $W(t)$ 
\begin{equation*} \label{eq:gaussian_wind_process}
    W(t) = \exp\bigg\{ \Lambda_W(t) + Y(t) \bigg\}, 
\end{equation*}
with
\begin{equation*}    \label{eq:gaussian_wind_underlying_process_Y}
    dY(t) = \kappa_W[\theta_W - Y(t)]dt + \sigma_W dZ^W(t),
\end{equation*}
where $\Lambda_W(t)$ is a deterministic function representing the seasonality component and $\kappa_W$, $\theta_W$, $\sigma_W \in \mathbb{R}$ are, respectively, the speed of mean reversion, the long-run mean, and the volatility of the logarithm of the de-seasonalized wind speed $Y(t)$. Additionally, $Z^W(t)$ is a standard Brownian motion.
This implies, by Itô's lemma, that
\begin{equation*}
    \label{eq:gaussian_implicit_SDE_price_wind}
    dW(t) = W(t)\left[\Lambda_W'(t) + \kappa_W(\theta_W - Y(t)) + \frac{1}{2}\sigma_W^2\right]dt + W(t) \sigma_W dZ^W(t),
\end{equation*}
with
\begin{align}
W(T) &= \exp\left\{\Lambda_W(T)  + Y(t)e^{-\kappa_W (T-t)} + \theta_W(1-e^{-\kappa_W (T-t)}) +\notag \right.\\
& \quad + \left. \sigma_W \int_t^T e^{-\kappa_W(T-s)} dZ^W(s)\right\}.\label{eq:wind_sol_gauss}
\end{align}
Finally, we have that 
\begin{equation}\label{eq:lambda_W}
    \Lambda_W(t) = \mu_W + a_W \cos(2\pi t/365) + b_W\sin(2\pi t /365),
\end{equation}
with $\mu_W, a_W, b_W \in \mathbb{R}$.

As for spot the price of electricity, a similar model is assumed, i.e.
\begin{equation*}
    \label{eq:gaussian_electricity_process}
    S(t) = \exp\bigg\{ \Lambda_S(t) + X(t) \bigg\}, 
\end{equation*}
with
\begin{equation*}\label{eq:gaussian_electricity_underlying_process_Y}
    dX(t) = \kappa_S(\theta_S - X(t))dt + \sigma_S dZ^S(t), 
\end{equation*}
which implies, by Itô's lemma, that
\begin{equation*}\label{eq:gaussian_implicit_SDE_price_electricity}
    dS(t) = S(t)\left[\Lambda_S'(t) + \kappa_S(\theta_S - X(t)) + \frac{1}{2}\sigma_S^2\right]dt + S(t) \sigma_S dZ^S(t),
\end{equation*}
with
\begin{align}
S(T) &= \exp\left\{\Lambda_S(T) + X(t)e^{-\kappa_S (T-t)} + \theta_S(1-e^{-\kappa_S (T-t)}) + \right. \notag \\
& \quad \left. + \sigma_S \int_t^T e^{-\kappa_S(T-u)} dZ^S(u)\right\}.\label{eq:electricity_sol_gauss}
\end{align}
We also have that 
\begin{equation}\label{eq:lambda_S}
    \Lambda_S(t) = \mu_S + a_S\cos(2\pi t/365) + b_S\sin(2\pi t /365),
\end{equation}
with $\mu_S, a_S, b_S \in \mathbb{R}$.
Finally, we assume that 
\begin{equation*}
    \mathbb{E}^\mathbb{Q}[dZ^S(t) dZ^W(t)] = \rho dt.
\end{equation*}

\begin{Proposition}\label{prop:val_gauss}
The discounted expected payoff of a wind power PPA (Eq. \eqref{eq:pricing_wind_ppa_definition}), under the Gaussian model, is
\begin{align}
    \nonumber
    \mathbb{E}_t^\mathbb{Q}[\Pi(t,T,W(t),S(t))] & = \sum_{j=1}^n UPPA(t, T_j, W(t), S(t)) \notag
    \\
    &=\alpha \sum_{j=1}^n  D(t,T_j)\Big\{ e^{3\Lambda_W(T_j) + 3Y(t)e^{-\kappa_W(T_j-t)} + 3\theta_W(1-e^{-\kappa_W(T_j-t)})}  \notag \\
    &\quad \times e^{\Lambda_S(T_j) + X(t)e^{-\kappa_S(T_j-t)}  + \theta_S(1-e^{-\kappa_S(T_j-t)}) } \notag\\
    & \quad \times e^{\frac{\sigma_S^2(1-\rho^2)}{4\kappa_S}(1-e^{-2\kappa_S(T_j-t)})  + \frac{9 \sigma_W^2}{4 \kappa_W}(1-e^{-2\kappa_W (T-t)})} \notag  \\
    & \quad \times e^{\frac{\rho^2 \sigma_S^2}{4\kappa_S}(1-e^{-2\kappa_S(T_j-t)})+ \frac{3\rho\sigma_W\sigma_S}{\kappa_W+\kappa_S}(1-e^{-(\kappa_W+\kappa_S)(T_j-t)})} \notag \\
    & \quad \times \big( \Phi(\bar{b})- \Phi(\bar{a}) \big) - K e^{3\Lambda_W(T_j) + 3Y(t)e^{-\kappa_W(T_j-t)}}  \notag \\
    &\quad  \times e^{3\theta_W(1-e^{-\kappa_W(T_j-t)}) + \frac{9\sigma_W^2}{4\kappa_W} (1-e^{-2\kappa_W(T-t)})} \big( \Phi(\tilde{b}) - \Phi(\tilde{a})\big)\Big\} ,\label{eq:final_ppa_gaussian}
\end{align}
where
\begin{align*}
&\tilde{a} = \frac{\ln(a) -  \Lambda_W(T_j) - Y(t)e^{-\kappa_W(T_j-t)} - \theta_W(1-e^{-\kappa_W(T_j-t)}) -\frac{3\sigma_W^2}{2\kappa_W}(1-e^{-2\kappa_W(T_j-t)})}{\sigma_W\sqrt{\frac{1-e^{-2\kappa_W(T_j-t)}}{2\kappa_W}}} , \\
&\tilde{b} = \frac{\ln(b) -  \Lambda_W(T_j) - Y(t)e^{-\kappa_W(T_j-t)} - \theta_W(1-e^{-\kappa_W(T_j-t)}) -\frac{3\sigma_W^2}{2\kappa_W}(1-e^{-2\kappa_W(T_j-t)})}{\sigma_W\sqrt{\frac{1-e^{-2\kappa_W(T_j-t)}}{2\kappa_W}}} , \\
&\bar{a} = \bigg(\ln(a) - \Lambda_W(T_j)  - Y(t)e^{-\kappa_W (T_j-t)} - \theta_W(1-e^{-\kappa_W (T_j-t)}) +\\ 
&\quad - \frac{3\sigma^2_W}{2\kappa_W} (1-e^{-2\kappa_W(T_j-t)}) -\frac{\rho \sigma_S \sigma_W}{\kappa_W + \kappa_S} (1-e^{-(\kappa_W+\kappa_S)(T_j-t)}) \bigg)
\bigg/ \\
& \quad {\sigma_W  \sqrt{\frac{1}{2\kappa_W}(1- e^{-2\kappa_W(T_j-t)})}} \\
&\bar{b} = \bigg(\ln(b) - \Lambda_W(T_j)  - Y(t)e^{-\kappa_W (T_j-t)} - \theta_W(1-e^{-\kappa_W (T_j-t)}) + \\
&\quad - \frac{3\sigma^2_W}{2\kappa_W} (1-e^{-2\kappa_W(T_j-t)}) -\frac{\rho \sigma_S \sigma_W}{\kappa_W + \kappa_S} (1-e^{-(\kappa_W+\kappa_S)(T_j-t)}) \bigg)
\bigg/ \\
& \quad {\sigma_W  \sqrt{\frac{1}{2\kappa_W}(1- e^{-2\kappa_W(T_j-t)})}}.
\end{align*}
\end{Proposition}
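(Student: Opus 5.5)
The computation is carried out one term $j$ at a time, with the discount factor $D(t,T_j)$ treated as deterministic and pulled out of the conditional expectation. Each $UPPA(t,T_j,W(t),S(t))$ then splits into two Gaussian-type integrals:
\[
\mathbb{E}^\mathbb{Q}_t\big[W(T_j)^3\mathbf{1}_{\{W(T_j)\in[a,b]\}}S(T_j)\big]
\quad\text{and}\quad
K\,\mathbb{E}^\mathbb{Q}_t\big[W(T_j)^3\mathbf{1}_{\{W(T_j)\in[a,b]\}}\big].
\]
By the explicit solutions \eqref{eq:wind_sol_gauss} and \eqref{eq:electricity_sol_gauss}, conditional on $\mathcal{F}_t$ we have $\ln W(T_j)=m_W+\xi_W$ and $\ln S(T_j)=m_S+\xi_S$. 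Here $m_W,m_S$ are the deterministic parts (seasonality, the mean-reverted initial value, and the $\theta$ term), and $(\xi_W,\xi_S)$ is a centred bivariate normal vector. By Itô isometry its variances are $v_W=\sigma_W^2(1-e^{-2\kappa_W\tau})/(2\kappa_W)$ and $v_S=\sigma_S^2(1-e^{-2\kappa_S\tau})/(2\kappa_S)$, and its covariance is $c=\rho\sigma_W\sigma_S(1-e^{-(\kappa_W+\kappa_S)\tau})/(\kappa_W+\kappa_S)$, where $\tau=T_j-t$. The indicator becomes $\{\ln a-m_W\le\xi_W\le\ln b-m_W\}$.

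\textbf{Second term.} This is a truncated lognormal moment. I would write
\[
\mathbb{E}\big[e^{3\xi_W}\mathbf{1}_{\{\xi_W\in[\ell,u]\}}\big]=e^{\frac{9}{2}v_W}\,\mathbb{P}^*(\xi_W\in[\ell,u]),
\]
where $\mathbb{P}^*$ is the exponentially tilted measure. Equivalently, complete the square in the Gaussian density. Under the tilt, $\xi_W$ has mean $3v_W$ and the same variance. This yields the factor $e^{3m_W+\frac{9}{2}v_W}\big(\Phi(\tilde b)-\Phi(\tilde a)\big)$, with $\tilde a,\tilde b$ standardised by $\sqrt{v_W}$ after subtracting $m_W+3v_W$. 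This matches the stated $\tilde a,\tilde b$, since $3v_W=\frac{3\sigma_W^2}{2\kappa_W}(1-e^{-2\kappa_W\tau})$.

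\textbf{First term.} Here I would decompose $\xi_S=\frac{c}{v_W}\xi_W+\eta$, with $\eta$ independent of $\xi_W$ and $\mathrm{Var}(\eta)=v_S-c^2/v_W$. Conditioning on $\xi_W$ and integrating out $\eta$ gives the factor $e^{m_S+\frac12(v_S-c^2/v_W)}$. This is presumably what the paper writes as the $(1-\rho^2)$ piece plus the $\rho^2$ piece of the $S$-variance. What remains is $\mathbb{E}[e^{(3+c/v_W)\xi_W}\mathbf{1}]$, and the same tilting step handles it. Under the tilt, $\xi_W$ has mean $(3+c/v_W)v_W=3v_W+c$, which explains the extra shift $c$ in $\bar a,\bar b$. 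The prefactor is $e^{\frac12(3+c/v_W)^2v_W}$, which expands to $e^{\frac92v_W+3c+\frac{c^2}{2v_W}}$. Collecting all exponents gives the total $3m_W+m_S+\frac92 v_W+\frac12 v_S+3c$, and this reproduces the displayed exponent: the $\frac{9\sigma_W^2}{4\kappa_W}$, the two $\frac{\sigma_S^2}{4\kappa_S}$ pieces, and $\frac{3\rho\sigma_W\sigma_S}{\kappa_W+\kappa_S}$.

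\textbf{Main obstacle.} I expect the bookkeeping in this first term to be the hardest step. Getting the joint Gaussian tilt right requires the correct covariance integral $\int_t^{T_j}e^{-\kappa_W(T_j-s)}e^{-\kappa_S(T_j-s)}\rho\,ds$ and the correct shift $3v_W+c$ in the truncation bounds. I would first verify it cleanly with a single tilting argument. Under the measure with density proportional to $e^{3\xi_W+\xi_S}$, the Gaussian $\xi_W$ has mean shifted by $\mathrm{Cov}(\xi_W,3\xi_W+\xi_S)=3v_W+c$, and the normaliser is $e^{\frac12\mathrm{Var}(3\xi_W+\xi_S)}$. This gives both the exponent and $\bar a,\bar b$ in one line. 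Finally, multiplying by $\alpha D(t,T_j)$ and summing over $j$ gives \eqref{eq:final_ppa_gaussian}. One further point: the $T$ appearing in the $\frac{9\sigma_W^2}{4\kappa_W}$ terms should be read as $T_j$.
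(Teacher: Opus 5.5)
Your proof is correct and follows essentially the paper's argument. Both split each $UPPA(t,T_j)$ into the two expectations, evaluate each by an exponential (Esscher) tilt, and read $\tilde a,\tilde b$ and $\bar a,\bar b$ off the tilted means $3v_W$ and $3v_W+c$. The only difference is in the correlated term: the paper uses the Cholesky split $dZ^S=\rho\,dB^W+\sqrt{1-\rho^2}\,dB^S$ and Girsanov on $B^W$, while your single tilt of the terminal pair by $e^{3\xi_W+\xi_S}$ gets the same exponent and bounds in one line. You are also right that $T$ should read $T_j$ in the $\frac{9\sigma_W^2}{4\kappa_W}$ exponents.

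One small slip in an aside: your intermediate factor $e^{\frac12(v_S-c^2/v_W)}$ is not the sum of the paper's two $\sigma_S^2$ pieces. That sum is $\frac12 v_S$, and the paper's $\rho^2$ piece comes from its $B^W$ tilt. Your $-c^2/(2v_W)$ cancels against the $+c^2/(2v_W)$ in your tilt prefactor, so the total exponent you collect is correct.
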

\begin{proof}
    The proof is provided in \ref{app:ppa_gaussian_val}, under the assumption of deterministic interest rates.
\end{proof}

\begin{Proposition}\label{prop:K_gaussian}
The risk-neutral constant price $K$ of a wind power PPA contract (Eq. \eqref{eq:price_k_of_wind_ppa}), under the Gaussian model, is
\begin{align}
    K &= \bigg(\sum_{j=1}^n D(t,T_j) e^{3\Lambda_W(T_j) + 3Y(t)e^{-\kappa_W(T_j-t)} + 3\theta_W(1-e^{-\kappa_W(T_j-t)}) + \Lambda_S(T_j)}  \notag \\
    &\quad \times e^{X(t)e^{-\kappa_S(T_j-t)}  + \theta_S(1-e^{-\kappa_S(T_j-t)}) +\frac{\sigma_S^2(1-\rho^2)}{4\kappa_S}(1-e^{-2\kappa_S(T_j-t)}) + \frac{9 \sigma_W^2}{4 \kappa_W}(1-e^{-2\kappa_W (T-t)})} \notag\\
    & \quad \times e^{\frac{\rho^2 \sigma_S^2}{4\kappa_S}(1-e^{-2\kappa_S(T_j-t)}) + \frac{3\rho\sigma_W\sigma_S}{\kappa_W+\kappa_S}(1-e^{-(\kappa_W+\kappa_S)(T_j-t)})} \big( \Phi(\bar{b})- \Phi(\bar{a}) \big) \bigg) \bigg/ \notag \\
    & \quad  \bigg( \sum_{j=1}^n D(t,T_j) e^{3\Lambda_W(T_j) + 3Y(t)e^{-\kappa_W(T_j-t)} + 3\theta_W(1-e^{-\kappa_W(T_j-t)}) +\frac{9\sigma_W^2}{4\kappa_W} (1-e^{-2\kappa_W(T-t)})} \notag \\
    & \quad \times \big( \Phi(\tilde{b}) - \Phi(\tilde{a})\big)\bigg)\label{eq:K_gauss}
\end{align}
where
\begin{align*}
&\tilde{a} = \frac{\ln(a) -  \Lambda_W(T_j) - Y(t)e^{-\kappa_W(T_j-t)} - \theta_W(1-e^{-\kappa_W(T_j-t)}) -\frac{3\sigma_W^2}{2\kappa_W}(1-e^{-2\kappa_W(T_j-t)})}{\sigma_W\sqrt{\frac{1-e^{-2\kappa_W(T_j-t)}}{2\kappa_W}}} , \\
&\tilde{b} = \frac{\ln(b) -  \Lambda_W(T_j) - Y(t)e^{-\kappa_W(T_j-t)} - \theta_W(1-e^{-\kappa_W(T_j-t)}) -\frac{3\sigma_W^2}{2\kappa_W}(1-e^{-2\kappa_W(T_j-t)})}{\sigma_W\sqrt{\frac{1-e^{-2\kappa_W(T_j-t)}}{2\kappa_W}}}, \\
&\bar{a} = \bigg(\ln(a) - \Lambda_W(T_j)  - Y(t)e^{-\kappa_W (T_j-t)} - \theta_W(1-e^{-\kappa_W (T_j-t)}) - \frac{3\sigma^2_W}{2\kappa_W} (1-e^{-2\kappa_W(T_j-t)})\\ 
&\quad -\frac{\rho \sigma_S \sigma_W}{\kappa_W + \kappa_S} (1-e^{-(\kappa_W+\kappa_S)(T_j-t)}) \bigg)
\bigg/ {\sigma_W  \sqrt{\frac{1}{2\kappa_W}(1- e^{-2\kappa_W(T_j-t)})}} \\
&\bar{b} = \bigg(\ln(b) - \Lambda_W(T_j)  - Y(t)e^{-\kappa_W (T_j-t)} - \theta_W(1-e^{-\kappa_W (T_j-t)}) - \frac{3\sigma^2_W}{2\kappa_W} (1-e^{-2\kappa_W(T_j-t))}) \\
&\quad -\frac{\rho \sigma_S \sigma_W}{\kappa_W + \kappa_S} (1-e^{-(\kappa_W+\kappa_S)(T_j-t)}) \bigg)
\bigg/ {\sigma_W  \sqrt{\frac{1}{2\kappa_W}(1- e^{-2\kappa_W(T_j-t)})}}.
\end{align*}
\end{Proposition}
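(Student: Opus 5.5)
The plan is to start from Definition \ref{def:price_wind} and reduce the statement to Proposition \ref{prop:val_gauss}. Interest rates are deterministic, as in the proof of Proposition \ref{prop:val_gauss}, so each discount factor $D(t,T_j)$ is $\mathcal{F}_t$-measurable and can be pulled out of the conditional expectations. The factor $\alpha>0$ appears in every term of both numerator and denominator of Eq. \eqref{eq:price_k_of_wind_ppa}, so it cancels. This leaves two families of conditional moments to evaluate for each $j$:
\begin{equation*}
N_j=\mathbb{E}^\mathbb{Q}_t\big[W(T_j)^3\mathbf{1}_{\{W(T_j)\in[a,b]\}}S(T_j)\big],\qquad M_j=\mathbb{E}^\mathbb{Q}_t\big[W(T_j)^3\mathbf{1}_{\{W(T_j)\in[a,b]\}}\big].
\end{equation*}
Then $K=\sum_j D(t,T_j)N_j\big/\sum_j D(t,T_j)M_j$.

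Both quantities were already computed in the proof of Proposition \ref{prop:val_gauss}. There, $\alpha\sum_j D(t,T_j)(N_j-KM_j)$ is written out explicitly. The first bracketed block of Eq. \eqref{eq:final_ppa_gaussian}, the one multiplying $\Phi(\bar b)-\Phi(\bar a)$, is exactly $N_j$. The block multiplied by $K$, with the factor $\Phi(\tilde b)-\Phi(\tilde a)$, is exactly $M_j$. So I will read these two expressions off and substitute them into the ratio. That yields Eq. \eqref{eq:K_gauss} with the same definitions of $\tilde a,\tilde b,\bar a,\bar b$.

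For a self-contained check I would re-derive $M_j$ and $N_j$ directly, in three steps.
\begin{itemize}
\item By Eqs. \eqref{eq:wind_sol_gauss} and \eqref{eq:electricity_sol_gauss}, conditionally on $\mathcal{F}_t$, $\ln W(T_j)=m_W+\xi_W$ and $\ln S(T_j)=m_S+\xi_S$. Here $(\xi_W,\xi_S)$ is jointly Gaussian with mean zero, variances $v_W=\sigma_W^2(1-e^{-2\kappa_W\tau})/(2\kappa_W)$ and $v_S$ defined analogously, and covariance $c=\rho\sigma_W\sigma_S(1-e^{-(\kappa_W+\kappa_S)\tau})/(\kappa_W+\kappa_S)$, where $\tau=T_j-t$.
\item For $M_j$, apply the exponential-tilting identity $\mathbb{E}[e^{3\xi}\mathbf{1}_{\{\xi\in I\}}]=e^{9v/2}\,\mathbb{P}(\xi+3v\in I)$. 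This produces the factor $e^{9v_W/2}$ and the standardized bounds $\tilde a,\tilde b$.
\item For $N_j$, decompose $\xi_S=(c/v_W)\xi_W+\eta$ with $\eta$ independent of $\xi_W$. Integrate out $\eta$, which contributes $e^{(v_S-c^2/v_W)/2}$, and then tilt $\xi_W$ by the exponent $3+c/v_W$. This gives the factor $e^{9v_W/2+3c+c^2/(2v_W)}$ and the shifted bounds $\bar a,\bar b$.
\end{itemize}

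I expect the main obstacle to be the exponent bookkeeping in $N_j$. The $\rho^2$ and $(1-\rho^2)$ terms of the statement must recombine correctly into $v_S/2$ plus the cross term. In addition, the drift shifts inside $\bar a,\bar b$ must be consistent with the tilt. The statement is also not fully self-consistent as printed: it writes $\frac{3\sigma_W^2}{2\kappa_W}$ where $3v_W$ would give $\frac{3\sigma_W^2}{2\kappa_W}(1-e^{-2\kappa_W\tau})$, and it has $T$ in place of $T_j$ in some exponents. My approach is to take Proposition \ref{prop:val_gauss} as given and note that the formula for $K$ inherits its expressions verbatim.
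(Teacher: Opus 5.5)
Your proposal is correct and is essentially the paper's proof, which likewise just substitutes the two conditional moments obtained in the proof of Proposition~\ref{prop:val_gauss} into the ratio in Eq.~\eqref{eq:price_k_of_wind_ppa}; your regression-plus-tilting check is an equivalent form of the paper's Cholesky and Esscher/Girsanov computation. One small correction to your closing remark: in the statement, the $\frac{3\sigma_W^2}{2\kappa_W}$ term in $\tilde a,\tilde b,\bar a,\bar b$ already carries the factor $(1-e^{-2\kappa_W(T_j-t)})$, so the only genuine misprint there is $T$ in place of $T_j$ in the $\frac{9\sigma_W^2}{4\kappa_W}$ exponents.
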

\begin{proof}
    The proof is provided in \ref{app:ppa_gaussian_price}, under the assumption of deterministic interest rates. 
\end{proof}

\subsubsection{PPA Pricing and payoff valuation with the jump-diffusion Ornstein-Uhlenbeck model}\label{sec:jump_model}
In the second approach we assume an analogous model to Section \ref{sec:gaussian_model} for wind speed $W(t)$, but, following \cite{risks6020056}, $Y(t)$ is a jump-diffusion Ornstein-Uhlenbeck process, i.e. 
\begin{align}
    W(t) &= \exp\bigg\{ \Lambda_W(t) + Y(t) \bigg\} \notag \\ 
    dY(t)&= \kappa_W[\theta_W - Y(t)]dt + \sigma_W dZ^W(t) + dL^W(t), \notag
\end{align}
where $\Lambda_W(t)$ is a deterministic function representing the seasonality component and $\kappa_W$, $\theta_W$, $\sigma_W \in \mathbb{R}$ are, respectively, the speed of mean reversion, the long-run mean, and the volatility of the logarithm of the de-seasonalized wind speed $Y(t)$. Additionally, $Z^W(t)$ is a standard Brownian motion and $L^W(t)$ is a compound Poisson process with Gaussian jumps.
Applying Itô's lemma for jump processes gives
\begin{align}
\label{eq:implicit_SDE_wind_intensity}
dW(t) &= W(t)\left[\Lambda_W'(t) + \kappa_W(\theta_W - Y(t)) + \frac{1}{2}\sigma_W^2\right]dt + W(t) \sigma_W dZ^W(t) \notag \\
& \quad +\int_\mathbb{R} W(t^-) \alpha_W \mathcal{J}^Y(dt, d\alpha_W),
\end{align}
in which the following relationship is exploited
\begin{align*}
    &\int_\mathbb{R} W(t^-) \alpha_W dL^W(t)= \int_0^t\int_\mathbb{R} W(s^-) \alpha_W \mathcal{J}^Y(ds, d\alpha_W),
\end{align*}
where $\mathcal{J}(ds, d\alpha_W)$ is the Poisson Random measure and $\alpha_W$ is a random variable representative of the wind jump size. 
As for the second stochastic component, we follow \cite{risks6020056} and \cite{benth2021MultivariateWind} and model the price of electricity $S(t)$ as 
\begin{equation*}
    S(t) = \exp\bigg\{ \Lambda_S(t) + X(t) \bigg\} 
\end{equation*}
\begin{equation*}
    dX(t) = \kappa_S(\theta_S - X(t))dt + \sigma_S dZ^S(t) + dL^S(t)
\end{equation*}  

\begin{align}
    dS(t) &= S(t)\left[\Lambda_S'(t) + \kappa_S(\theta_S - \log(S(t))) + \frac{1}{2}\sigma_S^2\right]dt + S(t) \sigma_S dZ^S(t) \notag \\
    &\quad +\int_\mathbb{R} S(t^-) \alpha_S \mathcal{J}^S(dt, d\alpha_S), \label{eq:implicit_SDE_price_electricity}
\end{align}
where $\Lambda_S(t)$ is a deterministic function representing the seasonality component.
Moreover, we assume that 
\begin{equation}\label{eq:electricity_wind_dependence_condition}
    \mathbb{E}^\mathbb{Q}[dZ^S(t) dZ^W(t)] = \rho dt.
\end{equation}

\begin{Proposition}\label{prop:val_jump}
The discounted expected payoff of a wind power PPA (Eq. \eqref{eq:pricing_wind_ppa_definition}) under the jump-diffusion Ornstein-Uhlenbeck model is
\begin{align}
\mathbb{E}^\mathbb{Q}_t[\Pi(t,T,W(t),S(t))] & = \sum_{j=1}^n UPPA(t, T_j, W(t), S(t)) 
    \notag\\
    &= \alpha D(t,T_j)\Bigg\{e^{3\Lambda_W(T_j) + 3Y(t)e^{-\kappa_W(T_j-t)} + 3\theta_W(1-e^{-\kappa_W(T_j-t)})}  \notag\\ 
    &\quad \times e^{X(t)e^{-\kappa_S(T_j-t)} + \theta_S(1-e^{-\kappa_S(T_j-t)})} \mathbb{E}^\mathbb{Q}_t[e^{B}]\mathbb{E}^\mathbb{Q}[e^{3(U+V) + A}] \notag \\
    & \quad \times \bigg(\frac{1}{\pi} \int_0^{+\infty} \frac{Im\{e^{-i\xi \tilde{a}} \bar{\varphi}_{U+V}(\xi) \}}{\xi} d\xi \notag \\
    &\quad - \frac{1}{\pi} \int_0^{+\infty} \frac{Im\{e^{-i\xi \tilde{b}} \bar{\varphi}_{U+V}(\xi) \}}{\xi} d\xi\bigg) \notag \\
    & \quad - K \Bigg( e^{3\Lambda_W(T_j) + 3Y(t)e^{-\kappa_W(T_j-t)} + 3\theta_W(1-e^{-\kappa_W(T_j-t)})} \notag \\
    & \quad \times \mathbb{E}^\mathbb{Q}[e^{3(U+V)}] \bigg(\frac{1}{\pi} \int_0^{+\infty} \frac{Im\{e^{-i\xi \tilde{a}} \tilde{\varphi}_{U+V}(\xi) \}}{\xi} d\xi +\notag \\
    &\quad - \frac{1}{\pi} \int_0^{+\infty} \frac{Im\{e^{-i\xi \tilde{b}} \tilde{\varphi}_{U+V}(\xi) \}}{\xi} d\xi\bigg) \Bigg) \Bigg\}, \label{eq:final_ppa_jump}
\end{align}
where 
\begin{align*}
    &\tilde{a} = \log(a) - (\Lambda_W(T_j) + Y(t) e^{-\kappa_W(T_j-t)} + \theta_W (1 - e^{-\kappa_W(T_j-t)} ))
    \\
    & \tilde{b} = \log(b) - (\Lambda_W(T_j) + Y(t) e^{-\kappa_W(T_j-t)} + \theta_W (1 - e^{-\kappa_W(T_j-t)} )) \\
    &\bar{\varphi}_{U+V}(\xi) =  \frac{\varphi_{U,A}(\xi - 3i, -i )\varphi_{V}(\xi - 3i)}{\varphi_{U,A}(- 3i, -i )\varphi_{V}(- 3i)} \\
    &\tilde{\varphi}_{U+V}(\xi) = \frac{\varphi_{U}(\xi - 3i)\varphi_{V}(\xi - 3i)}{\varphi_U(-3i)\varphi_V(-3i)}.
\end{align*}
\end{Proposition}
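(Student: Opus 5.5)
We follow the same route as in the proof of Proposition \ref{prop:val_gauss}: take interest rates deterministic, so that $D(t,T_j)$ leaves the conditional expectation and each $UPPA(t,T_j,W(t),S(t))$ splits into two terms,
\[
\mathbb{E}^\mathbb{Q}_t\big[W(T_j)^3\mathbf{1}_{\{W(T_j)\in[a,b]\}}S(T_j)\big]
\quad\text{and}\quad
K\,\mathbb{E}^\mathbb{Q}_t\big[W(T_j)^3\mathbf{1}_{\{W(T_j)\in[a,b]\}}\big].
\]

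\textbf{Explicit solutions.} Solve the jump OU equations by variation of constants:
\[
Y(T_j)=Y(t)e^{-\kappa_W(T_j-t)}+\theta_W\big(1-e^{-\kappa_W(T_j-t)}\big)+U+V,
\]
where
\[
U=\sigma_W\int_t^{T_j}e^{-\kappa_W(T_j-s)}\,dZ^W(s),\qquad V=\int_t^{T_j}e^{-\kappa_W(T_j-s)}\,dL^W(s).
\]
The price factor has the analogous decomposition: its Gaussian part is $A=\sigma_S\int_t^{T_j}e^{-\kappa_S(T_j-u)}\,dZ^S(u)$ and its jump part is $B=\int_t^{T_j}e^{-\kappa_S(T_j-u)}\,dL^S(u)$. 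We assume $L^W$ and $L^S$ are independent of each other and of the Brownian motions. Then $(U,A)$ is a bivariate Gaussian vector, $V$ and $B$ are independent of it, and $B$ is independent of everything on the wind side. This lets $\mathbb{E}_t[e^{B}]$ factor out of the first term, and $\Lambda_S(T_j)$ (or its deterministic exponential) is absorbed accordingly. The indicator $\{W(T_j)\in[a,b]\}$ becomes $\{U+V\in[\tilde a,\tilde b]\}$ with $\tilde a,\tilde b$ as in the statement.

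\textbf{Measure change.} Each term now has the form $\mathbb{E}[e^{3(U+V)+A}\mathbf{1}_{\{U+V\in[\tilde a,\tilde b]\}}]$, or the same without $A$ for the $K$-term. Write this as $\mathbb{E}[e^{3(U+V)+A}]\cdot\widehat{\mathbb{P}}(U+V\in[\tilde a,\tilde b])$, where $\widehat{\mathbb{P}}$ is the Esscher-tilted measure with density $e^{3(U+V)+A}/\mathbb{E}[e^{3(U+V)+A}]$. The characteristic function of $U+V$ under $\widehat{\mathbb{P}}$ is
\[
\mathbb{E}[e^{i\xi(U+V)}e^{3(U+V)+A}]\big/\mathbb{E}[e^{3(U+V)+A}].
\]
By independence of $(U,A)$ and $V$, this is exactly
\[
\bar\varphi_{U+V}(\xi)=\frac{\varphi_{U,A}(\xi-3i,-i)\,\varphi_V(\xi-3i)}{\varphi_{U,A}(-3i,-i)\,\varphi_V(-3i)},
\]
using the analytic extension of the joint characteristic function. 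For the $K$-term, the same argument without $A$ gives $\tilde\varphi_{U+V}$.

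\textbf{Inversion.} Apply the Gil-Pelaez inversion formula
\[
\widehat{\mathbb{P}}(U+V\le x)=\tfrac12-\tfrac1\pi\int_0^\infty \frac{\mathrm{Im}\{e^{-i\xi x}\hat\varphi(\xi)\}}{\xi}\,d\xi
\]
at $x=\tilde b$ and $x=\tilde a$ and take the difference; the constants $\tfrac12$ cancel. This yields the bracketed integral terms in \eqref{eq:final_ppa_jump}. Summing over $j$ with $D(t,T_j)$ and the factor $\alpha$ completes the proof. The characteristic functions themselves are explicit. $\varphi_{U,A}$ is Gaussian, with the variances and covariance computed as in the Gaussian case: $\sigma_W^2(1-e^{-2\kappa_W\tau})/(2\kappa_W)$ and $\rho\sigma_W\sigma_S(1-e^{-(\kappa_W+\kappa_S)\tau})/(\kappa_W+\kappa_S)$. $\varphi_V$ follows from the Lévy–Khintchine formula for integrals of deterministic kernels against compound Poisson processes, $\exp\{\lambda\int_t^{T_j}(\phi_J(\xi e^{-\kappa_W(T_j-s)})-1)\,ds\}$ with $\phi_J$ the Gaussian jump characteristic function.

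\textbf{Main obstacle.} The delicate step is justifying the complex-shifted characteristic functions. We need $\mathbb{E}[e^{3V}]<\infty$, which holds because the jumps are Gaussian, and finiteness of the joint moment $\mathbb{E}[e^{3U+A}]$. We also need analyticity of $\varphi$ in a strip containing $\mathrm{Im}=-3$, so that the tilted law is a genuine probability law to which Gil-Pelaez applies. The tilted law must also have no atoms at $\tilde a,\tilde b$; this holds since $U$ is nondegenerate Gaussian. We would verify these conditions explicitly before applying the inversion formula.
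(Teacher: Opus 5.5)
Your proposal is correct and follows the paper's proof essentially step for step. The shared route is: deterministic discounting; the $U,V,A,B$ decomposition with $\mathbb{E}^\mathbb{Q}_t[e^{B}]$ factored out by independence; Esscher tilts with densities $e^{3(U+V)}/\mathbb{E}^\mathbb{Q}[e^{3(U+V)}]$ and $e^{3(U+V)+A}/\mathbb{E}^\mathbb{Q}[e^{3(U+V)+A}]$, which give $\tilde{\varphi}_{U+V}$ and $\bar{\varphi}_{U+V}$ through complex-shifted characteristic functions; and Gil-Pelaez evaluated at $\tilde{b}$ and $\tilde{a}$. Your integrability and no-atom checks are extra rigor the paper skips.

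One correction: the factor $e^{\Lambda_S(T_j)}$ coming from $S(T_j)=e^{\Lambda_S(T_j)+X(T_j)}$ is not absorbed anywhere. It genuinely belongs in the first term and is missing from the displayed formula. The paper's own derivation drops it too, unlike in the Gaussian case, so you should flag it rather than claim it is absorbed.
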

\begin{proof}
The proof is provided in \ref{app:ppa_jump_val}, under the assumption of deterministic interest rates. 
\end{proof}

\begin{Proposition}\label{prop:K_jump}
The risk-neutral constant price $K$ of a wind power PPA contract (Eq. \eqref{eq:price_k_of_wind_ppa}), under the jump-diffusion Ornstein-Uhlenbeck model, is
\begin{align}
    K 
    &=\bigg( \sum_{j=1}^n D(t, T_j)e^{3\Lambda_W(T_j) + 3Y(t)e^{-\kappa_W(T_j-t)} + 3\theta_W(1-e^{-\kappa_W(T_j-t)}) + X(t)e^{-\kappa_S(T_j-t)}}  \notag\\ 
    &\quad \times e^{\theta_S(1-e^{-\kappa_S(T_j-t)})} \mathbb{E}^\mathbb{Q}_t[e^{B}]\mathbb{E}^\mathbb{Q}[e^{3(U+V) + A}] \big(\frac{1}{\pi} \int_0^{+\infty} \frac{Im\{e^{-i\xi \tilde{a}} \bar{\varphi}_{U+V}(\xi) \}}{\xi} d\xi \notag \\
    &\quad - \frac{1}{\pi} \int_0^{+\infty} \frac{Im\{e^{-i\xi \tilde{b}} \bar{\varphi}_{U+V}(\xi) \}}{\xi} d\xi\big) \bigg) \bigg/ \notag \\
    &\quad \Bigg( \sum_{j=1}^n D(t, T_j) e^{3\Lambda_W(T_j) + 3Y(t)e^{-\kappa_W(T_j-t)} + 3\theta_W(1-e^{-\kappa_W(T_j-t)})} \mathbb{E}^\mathbb{Q}[e^{3(U+V)}] \notag \\
    &\quad \times \bigg(\frac{1}{\pi} \int_0^{+\infty} \frac{Im\{e^{-i\xi \tilde{a}} \tilde{\varphi}_{U+V}(\xi) \}}{\xi} d\xi - \frac{1}{\pi} \int_0^{+\infty} \frac{Im\{e^{-i\xi \tilde{b}} \tilde{\varphi}_{U+V}(\xi) \}}{\xi} d\xi\bigg) \Bigg) \label{eq:K_jump}
\end{align}
where 
\begin{align*}
    &\tilde{a} = \log(a) - (\Lambda_W(T_j) + Y(t) e^{-\kappa_W(T_j-t)} + \theta_W (1 - e^{-\kappa_W(T_j-t)} ))
    \\
    & \tilde{b} = \log(b) - (\Lambda_W(T_j) + Y(t) e^{-\kappa_W(T_j-t)} + \theta_W (1 - e^{-\kappa_W(T_j-t)} )) \\
    &\bar{\varphi}_{U+V}(\xi) =  \frac{\varphi_{U,A}(\xi - 3i, -i )\varphi_{V}(\xi - 3i)}{\varphi_{U,A}(- 3i, -i )\varphi_{V}(- 3i)} \\
    &\tilde{\varphi}_{U+V}(\xi) = \frac{\varphi_{U}(\xi - 3i)\varphi_{V}(\xi - 3i)}{\varphi_U(-3i)\varphi_V(-3i)}.
\end{align*}
\end{Proposition}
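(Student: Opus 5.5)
The proof reduces to Proposition \ref{prop:val_jump}. By Definition \ref{def:price_wind}, $K$ is the ratio of $\sum_j \mathbb{E}^\mathbb{Q}_t[\alpha W(T_j)^3\mathbf{1}_{\{W(T_j)\in[a,b]\}}S(T_j)D(t,T_j)]$ to $\sum_j \mathbb{E}^\mathbb{Q}_t[\alpha W(T_j)^3\mathbf{1}_{\{W(T_j)\in[a,b]\}}D(t,T_j)]$. Under deterministic interest rates, $D(t,T_j)$ factors out of each expectation, and $\alpha$ cancels between numerator and denominator. So it suffices to evaluate the two building-block expectations
\[
N_j=\mathbb{E}^\mathbb{Q}_t\big[W(T_j)^3\mathbf{1}_{\{W(T_j)\in[a,b]\}}S(T_j)\big],\qquad M_j=\mathbb{E}^\mathbb{Q}_t\big[W(T_j)^3\mathbf{1}_{\{W(T_j)\in[a,b]\}}\big].
\]
These are exactly the two terms multiplying $\alpha D(t,T_j)$ and $-\alpha K D(t,T_j)$ in Eq. \eqref{eq:final_ppa_jump}. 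I would therefore reuse the computation behind Proposition \ref{prop:val_jump}, re-deriving it briefly so the notation $U,V,A,B$ is fixed.

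First, I solve the jump OU equations explicitly. This gives $\log W(T_j)=\Lambda_W(T_j)+Y(t)e^{-\kappa_W(T_j-t)}+\theta_W(1-e^{-\kappa_W(T_j-t)})+U+V$, where $U=\sigma_W\int_t^{T_j}e^{-\kappa_W(T_j-s)}dZ^W(s)$ is the Gaussian part and $V=\int_t^{T_j}e^{-\kappa_W(T_j-s)}dL^W(s)$ is the jump part. Similarly, $\log S(T_j)$ is the deterministic drift plus $X(t)e^{-\kappa_S(T_j-t)}$ plus $\theta_S(1-e^{-\kappa_S(T_j-t)})$ plus a Gaussian term $A$ correlated with $U$ through $\rho$, plus a jump term $B$. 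I assume the jump processes are independent of each other and of the Brownian motions, so $B$ factors out as $\mathbb{E}^\mathbb{Q}_t[e^B]$. (The seasonal factor $e^{\Lambda_S(T_j)}$ should also appear here; I would check whether it is absorbed into $A$ or the drift, and follow the convention of Proposition \ref{prop:val_jump}.)

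Second, I write the indicator as $\mathbf{1}_{\{U+V\in[\tilde a,\tilde b]\}}$ and change measure via the Esscher-type density $e^{3(U+V)+A}/\mathbb{E}[e^{3(U+V)+A}]$ for $N_j$, and $e^{3(U+V)}/\mathbb{E}[e^{3(U+V)}]$ for $M_j$. Under the tilted measure, $U+V$ has characteristic functions $\bar\varphi_{U+V}$ and $\tilde\varphi_{U+V}$ respectively. Independence of $V$ from $(U,A)$ gives the product form, with the joint characteristic function $\varphi_{U,A}$ evaluated at shifted complex arguments.

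Third, I express the tilted probabilities $\tilde{\mathbb{P}}(\tilde a\le U+V\le\tilde b)$ by Gil-Pelaez inversion: $\tilde{\mathbb P}(U+V> x)=\tfrac12+\tfrac1\pi\int_0^\infty \mathrm{Im}\{e^{-i\xi x}\varphi(\xi)\}/\xi\,d\xi$. The difference at $\tilde a$ and $\tilde b$ gives the bracketed integrals, and the $\tfrac12$ terms cancel. Substituting $N_j$ and $M_j$ into the ratio gives Eq. \eqref{eq:K_jump}.

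The main obstacle is justifying the exponential tilting. This requires the moment conditions $\mathbb{E}[e^{3(U+V)+A}]<\infty$, i.e. analyticity of $\varphi_{U,A}$ and $\varphi_V$ in a strip containing $-3i$. For Gaussian jumps in a compound Poisson process this holds, since the cumulant $\int_t^{T_j}\lambda(\mathbb{E}[e^{z e^{-\kappa_W(T_j-s)}J}]-1)ds$ is entire. I would state this explicitly. I would also verify that the Gil-Pelaez integrals converge at $\xi\to0$, using the continuity of the distribution supplied by the Gaussian component $U$.
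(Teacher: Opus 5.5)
Your proposal follows the paper's proof: the paper simply substitutes the two conditional expectations derived for Proposition \ref{prop:val_jump} into the ratio of Eq.~\eqref{eq:price_k_of_wind_ppa}, using the same $U,V,A,B$ decomposition, Esscher tilts and Gil-Pelaez inversion, with $D(t,T_j)$ pulled out under deterministic rates and $\alpha$ cancelling. Your suspicion about $e^{\Lambda_S(T_j)}$ is justified: since $\log S(T_j)=\Lambda_S(T_j)+X(T_j)$ and $A$ is a mean-zero stochastic integral, this factor cannot be absorbed into $A$ or the drift, and because it depends on $j$ it does not cancel in the ratio, so keep it in the numerator rather than following Proposition \ref{prop:val_jump}, whose formula omits it just as the statement here does.
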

\begin{proof}
The proof is provided in \ref{app:ppa_jump_price}, under the assumption of deterministic interest rates. 
\end{proof}

\section{The Valuation Adjustments}\label{sec:counterparty_credit_risk}
We next move to the introduction of the intensity-based credit-risk model, presented in Section \ref{sec:intensity_based_model}, of the Counterparty Valuation Adjustment framework, illustrated in Section \ref{sec:cva_framework}, and to the computation of the Counterparty Valuation Adjustments for the renewable energy PPAs, in Section \ref{sec:the_counterparty_credit_risk_in_ppa}. 

\subsection{The intensity-based credit risk model}
\label{sec:intensity_based_model}
To model the default event we follow the framework proposed by \cite{duffie1999modeling}, that can be also found in \cite{Bielecki_Rutkowski} and \cite{mcneil2015quantitative}, and the empirical application of which can be found in \cite{driessen2005default}. In this paragraph we provide a brief summary of the results that are used in our work. 

Let $(\Omega,\mathcal{F},\mathbb{P})$ be a probability space and let $\mathcal{F}_t = \sigma(\{\psi_t : s\leq t\})$ be the filtration generated by some observed background process, define the random time $\tau > 0$ a.s. on $\mathcal{F}$ and denote by $Y(t) = \mathrm{1}_{\{\tau\leq t \}}$ the associated jump indicator and by $\mathcal{H}_t = \sigma\{\mathrm{1}_{\{\tau\leq s\}}: s\leq t\}$ the filtration generated by 
$Y(t)$, then the general filtration for our purpose is defined as $$\mathcal{G}_t = \mathcal{F}_t \vee \mathcal{H}_t.$$
Then, $\tau$ is a stopping time with respect to $\mathcal{G}_t$ and $\mathcal{H}_t$, but not necessarily to $\mathcal{F}_t$. 
\theoremstyle{definition}\label{def:doubly_stochastic_random_time}
\begin{definition}{\textbf{Doubly stochastic random time (from \cite{Bielecki_Rutkowski})}}
\newline  
A random time $\tau$ is said to be doubly stochastic if there exist a positive $\mathcal{F}_t$ adapted process $\lambda_t$, such that $\Lambda_t = \int_0^t\lambda_s ds$ is strictly increasing and finite for every $t>0$ and such that, for all $t\geq 0$,
\begin{equation}
    \label{eq:survial_prob_stochastic_hazard_rates}
    P(\tau > t | \mathcal{F}_\infty) = e^{-\int_0^t\lambda_s ds}.
\end{equation}
In such a case, $\lambda_t$ is referred to as the $\mathcal{F}_t$-conditional hazard process of $\tau$.
\end{definition}
It is an immediate consequence of Eq \eqref{eq:survial_prob_stochastic_hazard_rates} that $P(\tau > t | \mathcal{F}_\infty)$ is $\mathcal{F}_t$ measurable.

\begin{lemma}\textbf{(From \cite{Bielecki_Rutkowski})}
\label{lemma:base_of_the_key_lemma}
For every $t\leq 0$, the following holds:
$$\mathcal{G}^{*}_t = \{A\in\mathcal{G}_t : \exists B\in\mathcal{F}_t, A\cap\{\tau>t\} = B \cap\{\tau > t\} \}.$$
\end{lemma}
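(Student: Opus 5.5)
The statement should be read, as in \cite{Bielecki_Rutkowski}, for every $t\geq 0$ (the ``$t\leq 0$'' is a typo). Its content is that $\mathcal{G}^{*}_t$, the class of sets $A\in\mathcal{G}_t$ for which some $B\in\mathcal{F}_t$ satisfies $A\cap\{\tau>t\}=B\cap\{\tau>t\}$, is all of $\mathcal{G}_t$. The inclusion $\mathcal{G}^{*}_t\subseteq\mathcal{G}_t$ holds by definition, so only the reverse inclusion needs proof. I would use the standard monotone-class argument: show that $\mathcal{G}^{*}_t$ is a $\sigma$-algebra containing both $\mathcal{F}_t$ and $\mathcal{H}_t$. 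Since $\mathcal{G}_t=\mathcal{F}_t\vee\mathcal{H}_t$ is the smallest $\sigma$-algebra with that property, this forces $\mathcal{G}_t\subseteq\mathcal{G}^{*}_t$.

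\textbf{Step 1: $\mathcal{F}_t\subseteq\mathcal{G}^{*}_t$.} For $A\in\mathcal{F}_t$, take $B=A$.

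\textbf{Step 2: $\mathcal{H}_t\subseteq\mathcal{G}^{*}_t$.} The generators of $\mathcal{H}_t$ are the sets $\{\tau\leq s\}$ with $s\leq t$. For such a set,
\begin{equation*}
\{\tau\leq s\}\cap\{\tau>t\}=\emptyset=\emptyset\cap\{\tau>t\},
\end{equation*}
so $B=\emptyset\in\mathcal{F}_t$ works. Once Step 3 shows that $\mathcal{G}^{*}_t$ is a $\sigma$-algebra, it contains these generators and hence all of $\mathcal{H}_t$.

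\textbf{Step 3: $\mathcal{G}^{*}_t$ is a $\sigma$-algebra.}
\begin{itemize}
\item $\Omega\in\mathcal{G}^{*}_t$, with $B=\Omega$.
\item Complements: if $A\cap\{\tau>t\}=B\cap\{\tau>t\}$, then intersecting $\{\tau>t\}$ with the complements gives $A^c\cap\{\tau>t\}=B^c\cap\{\tau>t\}$. Also $A^c\in\mathcal{G}_t$ and $B^c\in\mathcal{F}_t$.
\item Countable unions: if $A_n\cap\{\tau>t\}=B_n\cap\{\tau>t\}$ for each $n$, then
\begin{equation*}
\Big(\bigcup_n A_n\Big)\cap\{\tau>t\}=\Big(\bigcup_n B_n\Big)\cap\{\tau>t\},
\end{equation*}
and $\bigcup_n B_n\in\mathcal{F}_t$.
\end{itemize}

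I do not expect a genuine technical obstacle. The only delicate points are interpretive. First, the displayed set identity is really a claim that $\mathcal{G}^{*}_t=\mathcal{G}_t$, not a definition. Second, one must use that $\mathcal{H}_t$ is generated only by events $\{\tau\leq s\}$ with $s\leq t$; these are all absorbed into the empty set on $\{\tau>t\}$, which is exactly why no information about $\tau$ survives on the pre-default event.
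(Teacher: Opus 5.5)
Your proof is correct, and you read the garbled statement the intended way: for $t\ge 0$, every $A\in\mathcal{G}_t$ agrees on $\{\tau>t\}$ with some $B\in\mathcal{F}_t$, i.e.\ $\mathcal{G}_t\subseteq\mathcal{G}^{*}_t$. The paper gives no proof of its own and only cites Bielecki--Rutkowski, whose proof is essentially yours: show that $\mathcal{G}^{*}_t$ is a $\sigma$-field containing $\mathcal{F}_t$ and $\mathcal{H}_t$, the latter because every set of $\mathcal{H}_t$ meets $\{\tau>t\}$ either in $\emptyset$ or in $\{\tau>t\}$; your check on the generators $\{\tau\le s\}$, $s\le t$, is a slightly more careful version of that last step.
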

This means that, before the default event, the only known events are the ones related to the background filtration.

\begin{lemma}\textbf{(From \cite{Bielecki_Rutkowski})}
\label{lemma:the_key_lemma}
Let $\tau$ be a random time (not necessarily doubly stochastic) such that $P(\tau > t | \mathcal{F}_t)>0$ for all $t\leq 0$, then for every integrable random variable:
\begin{equation*}
    \mathrm{E}[\mathrm{1}_{\{\tau>t\}}X|\mathcal{G}_t] = \mathrm{1}_{\{\tau>t\}}\frac{\mathrm{E}[\mathrm{1}_{\{\tau>t\}}X|\mathcal{F}_t]}{P(\tau > t | \mathcal{F}_t)}.
\end{equation*}
\end{lemma}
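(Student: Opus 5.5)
The statement is the classical "key lemma" of reduced-form credit risk. The plan is to prove it by checking directly the defining property of conditional expectation with respect to $\mathcal{G}_t$, using Lemma \ref{lemma:base_of_the_key_lemma} to describe the $\mathcal{G}_t$-measurable events on the survival set. (The conditions "$t\leq 0$" in the two lemmas are evidently typos for $t\geq 0$, and we read them that way.) Set
\begin{equation*}
Z := \mathrm{1}_{\{\tau>t\}}\frac{\mathrm{E}[\mathrm{1}_{\{\tau>t\}}X|\mathcal{F}_t]}{P(\tau > t | \mathcal{F}_t)}.
\end{equation*}
We need $Z$ to be $\mathcal{G}_t$-measurable and integrable, and $\mathrm{E}[\mathrm{1}_A \mathrm{1}_{\{\tau>t\}}X] = \mathrm{E}[\mathrm{1}_A Z]$ for every $A\in\mathcal{G}_t$.

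\textbf{Measurability and integrability.} The ratio is $\mathcal{F}_t$-measurable because the denominator is strictly positive by hypothesis. The indicator $\mathrm{1}_{\{\tau>t\}}$ is $\mathcal{H}_t$-measurable. Hence $Z$ is $\mathcal{G}_t$-measurable. For integrability, condition on $\mathcal{F}_t$ and use $\mathrm{E}[\mathrm{1}_{\{\tau>t\}}|\mathcal{F}_t]=P(\tau>t|\mathcal{F}_t)$. This gives
\begin{equation*}
\mathrm{E}|Z| \le \mathrm{E}\big[|\mathrm{E}[\mathrm{1}_{\{\tau>t\}}X|\mathcal{F}_t]|\big] \le \mathrm{E}|X|.
\end{equation*}

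\textbf{Reduction to $\mathcal{F}_t$-events.} Fix $A\in\mathcal{G}_t$. Both $\mathrm{1}_{\{\tau>t\}}X$ and $Z$ vanish off $\{\tau>t\}$, so only $A\cap\{\tau>t\}$ matters. By Lemma \ref{lemma:base_of_the_key_lemma} there is $B\in\mathcal{F}_t$ with $A\cap\{\tau>t\}=B\cap\{\tau>t\}$. It therefore suffices to show
\begin{equation*}
\mathrm{E}[\mathrm{1}_B\mathrm{1}_{\{\tau>t\}}X]=\mathrm{E}[\mathrm{1}_B Z] \qquad \text{for all } B\in\mathcal{F}_t.
\end{equation*}

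\textbf{Tower property.} Write $R:=\mathrm{E}[\mathrm{1}_{\{\tau>t\}}X|\mathcal{F}_t]/P(\tau>t|\mathcal{F}_t)$, which is $\mathcal{F}_t$-measurable. Condition on $\mathcal{F}_t$ and pull out $\mathrm{1}_B R$:
\begin{equation*}
\mathrm{E}[\mathrm{1}_B Z]=\mathrm{E}\big[\mathrm{1}_B R\,\mathrm{E}[\mathrm{1}_{\{\tau>t\}}|\mathcal{F}_t]\big]=\mathrm{E}\big[\mathrm{1}_B\,\mathrm{E}[\mathrm{1}_{\{\tau>t\}}X|\mathcal{F}_t]\big]=\mathrm{E}[\mathrm{1}_B\mathrm{1}_{\{\tau>t\}}X].
\end{equation*}
This is exactly the required identity.

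\textbf{Main obstacle.} The only delicate point is pulling the unbounded $\mathcal{F}_t$-measurable factor $\mathrm{1}_B R$ out of the conditional expectation. Pointwise the factor is finite, but $\mathrm{1}_B R\,\mathrm{1}_{\{\tau>t\}}$ need not be bounded, so the "taking out what is known" step is not automatic. To justify it, work with $X\ge 0$ first and invoke the version of that property for nonnegative variables, which holds by monotone convergence after truncating $R$ at level $n$. For general $X$, split $X=X^+-X^-$; both parts are integrable, so the identity for each part can be subtracted.

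Lemma \ref{lemma:base_of_the_key_lemma} is taken as given. If it had to be proved, the route would be a monotone-class argument: the family of $A$ admitting such a $B$ is a $\sigma$-algebra containing $\mathcal{F}_t$ and the generators $\{\tau\le s\}$, $s\le t$, of $\mathcal{H}_t$. For those generators one can take $B=\emptyset$.
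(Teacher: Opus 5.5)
Your proof is correct, including the integrability check and the care taken in pulling the unbounded $\mathcal{F}_t$-measurable factor out of the conditional expectation; the paper gives no proof of its own and only cites Bielecki--Rutkowski, whose argument uses the same two ingredients (the preceding lemma, to replace $\mathcal{G}_t$-information by $\mathcal{F}_t$-information on $\{\tau>t\}$, followed by conditioning on $\mathcal{F}_t$). The only difference is cosmetic: they write $\mathrm{E}[\mathrm{1}_{\{\tau>t\}}X|\mathcal{G}_t]=\mathrm{1}_{\{\tau>t\}}Y$ with an $\mathcal{F}_t$-measurable $Y$ and solve for $Y$ by conditioning on $\mathcal{F}_t$, whereas you propose the candidate and verify the defining property, which needs only the event form of the preceding lemma.
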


\begin{corollary} (\textbf{From \cite{mcneil2015quantitative}})
\label{corollary:corollary_key_lemma}
Let $T>t$ and assume $\tau$ is doubly stochastic with hazard process $\lambda_t$ if then $\Tilde{X}$ is integrable and $\mathcal{F}_T$ measurable, then:
\begin{equation*}
    \mathrm{E}\{\Tilde{X}\mathrm{1}_{\{\tau>T\}}|\mathcal{G}_t\} = \mathrm{1}_{\{\tau>t\}}\mathrm{E}\bigg[e^{-\int_t^T\lambda_sds}\Tilde{X} \bigg|\mathcal{F}_t\bigg].
\end{equation*}
\end{corollary}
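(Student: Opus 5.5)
The plan is to derive the Corollary from Lemma \ref{lemma:the_key_lemma}, applied with $X = \tilde X\mathrm{1}_{\{\tau>T\}}$. Since $T>t$, we have $\{\tau>T\}\subseteq\{\tau>t\}$, so $\mathrm{1}_{\{\tau>t\}}X = X$. The lemma then gives
\begin{equation*}
\mathrm{E}[\tilde X\mathrm{1}_{\{\tau>T\}}|\mathcal{G}_t] = \mathrm{1}_{\{\tau>t\}}\frac{\mathrm{E}[\tilde X\mathrm{1}_{\{\tau>T\}}|\mathcal{F}_t]}{P(\tau>t|\mathcal{F}_t)}.
\end{equation*}
The hypothesis $P(\tau>t|\mathcal{F}_t)>0$ holds automatically in the doubly stochastic case. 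Indeed, by the tower property, $P(\tau>t|\mathcal{F}_t)=\mathrm{E}[P(\tau>t|\mathcal{F}_\infty)|\mathcal{F}_t]$. By Eq. \eqref{eq:survial_prob_stochastic_hazard_rates} the inner term equals $e^{-\int_0^t\lambda_s ds}$, which is $\mathcal{F}_t$-measurable and strictly positive because $\Lambda_t$ is finite. Hence the denominator equals $e^{-\int_0^t\lambda_s ds}$.

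Next I compute the numerator by conditioning on $\mathcal{F}_\infty$. Since $\tilde X$ is $\mathcal{F}_T\subseteq\mathcal{F}_\infty$ measurable and integrable, it can be pulled out of the inner conditional expectation:
\begin{equation*}
\mathrm{E}[\tilde X\mathrm{1}_{\{\tau>T\}}|\mathcal{F}_t]=\mathrm{E}\big[\tilde X\,P(\tau>T|\mathcal{F}_\infty)\big|\mathcal{F}_t\big]=\mathrm{E}\big[\tilde X e^{-\int_0^T\lambda_s ds}\big|\mathcal{F}_t\big].
\end{equation*}
The step that deserves care is this pull-out. It is justified by conditional expectation of a product with a measurable factor, using integrability of $\tilde X$ and the bound $0\le P(\cdot|\mathcal{F}_\infty)\le 1$.

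Finally, I divide by the $\mathcal{F}_t$-measurable quantity $e^{-\int_0^t\lambda_s ds}$, moving it inside the conditional expectation. The exponents combine into $e^{-\int_t^T\lambda_s ds}$, which yields the stated identity. The only genuinely substantive input is Lemma \ref{lemma:the_key_lemma}, which is assumed. The remaining work is bookkeeping of measurability, which I expect to be the main point needing care rather than a real obstacle.
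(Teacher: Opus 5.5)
Your proof is correct, and it is the standard argument from the cited McNeil–Frey–Embrechts reference; the paper itself states the corollary without proof. That argument applies Lemma \ref{lemma:the_key_lemma} to $\tilde X\mathrm{1}_{\{\tau>T\}}$, then evaluates both the numerator and the denominator $P(\tau>t|\mathcal{F}_t)=e^{-\int_0^t\lambda_s ds}$ by conditioning on $\mathcal{F}_\infty$. One small addition: because $\lambda\ge 0$, you have $|\tilde X e^{-\int_t^T\lambda_s ds}|\le|\tilde X|$, and this integrability is what justifies moving $e^{\int_0^t\lambda_s ds}$ inside the final conditional expectation.
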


We finally assume, as is standard in the literature, that the hazard rate processes have the following form
\begin{align}
    \label{eq:hazard_rate_buyer}
    & d\lambda_I(t) = k_I(\theta_I - \lambda_I(t))dt + \sigma_I \sqrt{\lambda_I(t)} dW_{t}^{(I)} \\
    \label{eq:hazard_rate_seller}
    & d\lambda_C(t) = k_C(\theta_C - \lambda_C(t))dt + \sigma_C \sqrt{\lambda_C(t)} dW_{t}^{(C)}
\end{align}
and that the two processes are conditionally independent, in order to ensure the tractability of the resulting model.

\subsection{The Counterparty Valuation Adjustment  framework}\label{sec:cva_framework}
We adhere to the standard approach to counterparty credit risk presented in \cite{brigo2013counterparty}. Below follow a short summary of the framework and an application to PPAs.

\begin{definition} \textbf{Credit Valuation Adjustment (CVA)}.\newline
    The Credit Valuation Adjustment is the expected loss arising from the possible default of the counterparty. Formally, if the counterparty defaults at a time when the derivative has positive value to the non-defaulting party, the latter may not recover the full amount owed. The CVA is therefore defined as the risk-neutral expectation of the discounted loss due to counterparty default.
\end{definition}

\begin{definition} \textbf{Debit Valuation Adjustment (DVA)}.\newline
    The Debit Valuation Adjustment is the expected gain arising from the possible default of the reporting entity itself. If the entity defaults at a time when the derivative has negative value to it, it may avoid paying the full amount owed. DVA therefore represents the value of this limited liability: the entity's own default risk reduces the fair value of its liabilities. The DVA is therefore defined as the risk-neutral expectation of the discounted gain arising from the entity's own default.
\end{definition}

Let $\tau_I$ be the default time of the investor (in the standard literature, the bank, and in our setting, the PPA offtaker), let $\tau_C$ be the default time of the PPA energy producer (in the standard literature, the corporation), and let $\tau = min(\tau_I, \tau_C)$.

\begin{definition}\textbf{General bilateral counterparty risk valuation formula (from \cite{brigo2013counterparty})}\newline 
At valuation time $t$ and on the event of no default until $t$, $\{\tau > t\}$, the price of a generic contract with payoff at maturity $T$ under bilateral counterparty risk, $\bar\Pi(t,T)$, is   \begin{align*}
        \mathbb{E}^\mathbb{Q}[\bar\Pi(t,T)|\mathcal{G}_t] &= \mathbb{E}^\mathbb{Q}[\Pi(t,T)|\mathcal{G}_t] + \mathbb{E}^\mathbb{Q}[\mathrm{LGD_I}\mathbf{1}_{\{\mathcal{I}_1 \cup \mathcal{I}_2\}}D(t, \tau_I)\mathbb{E}^\mathbb{Q}[(-\Pi(\tau_I,T))|\mathcal{F}_{\tau_I}]^+|\mathcal{G}_t]  \\
        & \quad
        - \mathbb{E}^\mathbb{Q}[\mathrm{LGD_C}\mathbf{1}_{\{\mathcal{I}_3 \cup \mathcal{I}_4\}}D(t, \tau_C)(\mathbb{E}^\mathbb{Q}[\Pi(\tau_C,T)|\mathcal{F}_{\tau_C}])^+|\mathcal{G}_t],
    \end{align*}
where $D(t, \tau_I)$ is the discount factor from the default time $\tau_I$ to time $t$, $D(t, \tau_C)$ is the discount factor from the default time $\tau_C$ to time $t$, $\mathrm{LGD_I}$ is the loss given default to the creditor in the case of investor default and $\mathrm{LGD_C}$ is the loss given default to the investor in the case of default of the creditor. Additionally, the sets $\mathcal{I}$ have the following definitions:  
 $\mathcal{I}_1 = \{\tau_I < \tau_C < T \}$ 
 $\mathcal{I}_2 = \{\tau_I < T < \tau_C \}$ 
 $\mathcal{I}_3 = \{\tau_C < \tau_I < T \}$ 
 $\mathcal{I}_4 = \{\tau_C < T < \tau_I \}$ 
 $\mathcal{I}_5 = \{T < \tau_I < \tau_C \}$ 
 $\mathcal{I}_6 = \{T < \tau_C < \tau_I \}$. Furthermore, 
 $\mathbb{E}^\mathbb{Q}[\Pi(\tau_I,T)|\mathcal{F}_{\tau_I}]$ is the expected residual value of the contract at the time of investor default, if the investor defaults, while $\mathbb{E}^\mathbb{Q}[\Pi(\tau_C,T)|\mathcal{F}_{\tau_C}]$ is the expected residual value of the contract at the time of counterparty default, if the counterparty defaults. 
We finally remark that in the event $\{\tau>t\}$ we have that $\mathbb{E}^\mathbb{Q}[\Pi(t,T)|\mathcal{G}_t] = \mathbb{E}^\mathbb{Q}[\Pi(t,T)|\mathcal{F}_t]$. 
Explicit formulas for the CVA and DVA are then given by
\begin{align*}
    & \mathrm{CVA}(t,T) = \mathbb{E}^\mathbb{Q}[\mathrm{LGD_C}\mathbf{1}_{\{\mathcal{I}_3 \cup \mathcal{I}_4\}}D(t, \tau_C)\mathbb{E}^\mathbb{Q}[\Pi(\tau_C,T)|\mathcal{F}_{\tau_C}]^+|\mathcal{G}_t], \\
    & \mathrm{DVA}(t,T) = \mathbb{E}^\mathbb{Q}[\mathrm{LGD_I}\mathbf{1}_{\{\mathcal{I}_1 \cup \mathcal{I}_2\}}D(t, \tau_I)(\mathbb{E}^\mathbb{Q}[-\Pi(\tau_I,T)|\mathcal{F}_{\tau_I}]^+|\mathcal{G}_t].
\end{align*}
We finally define the Bilateral Value Adjustment as 
\begin{equation*}
    \mathrm{BVA}(t,T) = \mathrm{DVA}(t,T) - \mathrm{CVA}(t,T).
\end{equation*}
\end{definition}
Then, the PPA valuation formula under counterparty risk becomes
\begin{equation}
    \label{eq:ppa_pricing_formula}
    \begin{split}
        \mathbb{E}^\mathbb{Q}[\bar\Pi(t,T)|\mathcal{F}_t] &= \mathbb{E}^\mathbb{Q}[\Pi(t,T)|\mathcal{F}_t] + \mathrm{BVA}.
    \end{split}
\end{equation}

The calculation of $\mathbb{E}^\mathbb{Q}[\Pi(\tau_C,T)|\mathcal{F}_{\tau_C}]$ and $\mathbb{E}^\mathbb{Q}[\Pi(\tau_I,T)|\mathcal{F}_{\tau_I}]$, required for counterparty risk evaluation, is
approximated by discretizing the timeline $[t,T)$ with a finite series of buckets $[T_i, T_{i+1})$, $i \in \{0,n_b -1\}$, in which the actual default times $\tau_C$ and $\tau_I$ can occur. As in the cited literature, it is also assumed that, in case of default, liquidation happens at the initial time of each bucket. Then, the BVA becomes
\begin{equation}\label{eq:approximated_bva_formula}
    \begin{split}
        \mathrm{BVA}(t,T) &= - \mathrm{LGD_C} \sum_{i=0}^{n_b -1} \mathbb{E}^\mathbb{Q} [ \mathbf{1}_{\{\tau_I > \tau_C\}} \mathbf{1}_{\{T_i \le \tau_C < T_{i+1}\}} D(t,T_i) \left( \mathbb{E}^\mathbb{Q}\left[\Pi(T_i,T)|\mathcal{F}_{T_i}\right]^+ \right)|\mathcal{G}_t] \\
        &
        + \mathrm{LGD_I} \sum_{i=0}^{n_b -1} \mathbb{E}^\mathbb{Q} [ \mathbf{1}_{\{\tau_C > \tau_I\}} \mathbf{1}_{\{T_i \le \tau_I < T_{i+1}\}} D(t,T_i) \left( \mathbb{E}^\mathbb{Q}\left[(-\Pi(T_i,T))|\mathcal{F}_{T_i}\right]^+ \right)|\mathcal{G}_t].
    \end{split}
\end{equation}

We next move Eq. \eqref{eq:approximated_bva_formula} from the general, unobserved filtration $\mathcal{G}_t$, to the observed market filtration $\mathcal{F}_t$. 

\begin{definition}\label{prop:cva_dva_1}
    \textbf{(From \cite{brigo2013counterparty})} \newline
    The CVA and DVA under the market filtration $\mathcal{F}_t$ are
    \begin{align}
\label{eq:CVA_pricing_formula_under_market_filtration}
        \mathrm{CVA}(t,T) &= \mathrm{LGD_C}\sum_{i=0}^{n_b -1}\mathbb{E}^{\mathbb{Q}}\bigg[D(t,T_i)(\mathbb{E}^\mathbb{Q}[\Pi(T_i, T)|\mathcal{F}_{T_i}]^+) \notag \\
        &\quad \times \int_{T_i}^{T_{i+1}}\lambda_C(u)\exp\bigg\{\int_t^u \lambda_C(s)+\lambda_{I}(s) ds \bigg\} \bigg|\mathcal{F}_t \bigg] \\
\label{eq:DVA_pricing_formula_under_market_filtration}
        \mathrm{DVA}(t,T) &= \mathrm{LGD_I}\sum_{i=0}^{n_b -1}\mathbb{E}^{\mathbb{Q}}\bigg[D(t,T_i)(\mathbb{E}^\mathbb{Q}[(-\Pi(T_i, T))|\mathcal{F}_{T_i}]^+) \notag \\
        &\quad \times \int_{T_i}^{T_{i+1}}\lambda_I(u)\exp\bigg\{\int_t^u \lambda_C(s)+\lambda_{I}(s) ds \bigg\} \bigg|\mathcal{F}_t \bigg],
    \end{align}
where $\lambda_C$ and $\lambda_I$ are the hazard processes of the corporation and the investor, respectively.
\end{definition}
The proof can be found in \cite{green2015xva} and \cite{brigo2013counterparty}. 
The approach follows the \textit{wrong-way to risk} method of \cite{brigo2013counterparty}, which assumes the conditional independence of $\lambda_C$ and $\lambda_I$.
The next proposition then follows.
\begin{Proposition}\label{prop:cva_dva_2}\textbf{(From \cite{brigo2013counterparty})}\newline
    The CVA and DVA under the market filtration $\mathcal{F}_t$ are
    \begin{align}
        \label{eq:CVA_pricing_formula_under_market_filtration_v2}
        \mathrm{CVA}(t,T) &\approx \mathrm{LGD_C}\sum_{i=0}^{n_b -1}\mathbb{E}^{\mathbb{Q}}\bigg[D(t,T_i)(\mathbb{E}^\mathbb{Q}[\Pi(T_i, T)|\mathcal{F}_{T_i}]^+) 
        e^{-\int_t^{T_{i+1}} \lambda_I(s)ds} \times  \notag \\
        &\quad \times \bigg(e^{-\int_t^{T_{i}} \lambda_C(s)ds} - e^{-\int_t^{T_{i+1}} \lambda_C(s)ds} \bigg)
        \bigg|\mathcal{F}_t \bigg] \\
        \label{eq:DVA_pricing_formula_under_market_filtration_v2}
        \mathrm{DVA}(t,T) &\approx \mathrm{LGD_I}\sum_{i=0}^{n_b -1}\mathbb{E}^{\mathbb{Q}}\bigg[D(t,T_i)(\mathbb{E}^\mathbb{Q}[(-\Pi(T_i, T))|\mathcal{F}_{T_i}]^+)
        e^{-\int_t^{T_{i+1}} \lambda_C(s)ds}\times \notag \\
        &\quad \times \bigg(e^{-\int_t^{T_{i}} \lambda_I(s)ds} - e^{-\int_t^{T_{i+1}} \lambda_I(s)ds} \bigg) 
        \bigg|\mathcal{F}_t \bigg] 
    \end{align}
\end{Proposition}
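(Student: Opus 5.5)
We start from the bucketed formula \eqref{eq:approximated_bva_formula} and treat the CVA term; the DVA term follows by exchanging the roles of $C$ and $I$ and replacing $\Pi$ with $-\Pi$. The approximation sign in the statement comes from the bucket discretisation already made in \eqref{eq:approximated_bva_formula}: liquidation at $T_i$ and the ordering condition $\tau_I>\tau_C$ are handled at bucket resolution. For each bucket $i$ we must evaluate
\[
\mathbb{E}^\mathbb{Q}\big[\mathbf{1}_{\{\tau_I>\tau_C\}}\mathbf{1}_{\{T_i\le\tau_C<T_{i+1}\}}\,D(t,T_i)\,\mathbb{E}^\mathbb{Q}[\Pi(T_i,T)|\mathcal{F}_{T_i}]^+\,\big|\,\mathcal{G}_t\big].
\]
Here $X_i:=D(t,T_i)\,\mathbb{E}^\mathbb{Q}[\Pi(T_i,T)|\mathcal{F}_{T_i}]^+$ is $\mathcal{F}_{T_i}$-measurable, because interest rates are deterministic and the exposure depends only on $(W,S)$ up to $T_i$.

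The first step replaces the condition $\tau_I>\tau_C$, which is unknown inside the bucket, by its bucket-level version $\tau_I\ge T_{i+1}$. This is the same kind of approximation already used for the liquidation time. The difference between the two events is $\{T_i\le\tau_C<\tau_I<T_{i+1}\}$ together with a small piece where $\tau_I$ falls in the bucket before $\tau_C$, and its probability is $O((T_{i+1}-T_i)^2\lambda_C\lambda_I)$. Next we write
\[
\mathbf{1}_{\{T_i\le\tau_C<T_{i+1}\}}=\mathbf{1}_{\{\tau_C>T_i\}}-\mathbf{1}_{\{\tau_C>T_{i+1}\}}.
\]
The bucket term then becomes a difference of two expectations of the form $\mathbb{E}[X_i\mathbf{1}_{\{\tau_I>T_{i+1}\}}\mathbf{1}_{\{\tau_C>s\}}|\mathcal{G}_t]$ with $s\in\{T_i,T_{i+1}\}$.

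The second step computes these joint survival expectations. Under the standing assumptions, $\tau_C$ and $\tau_I$ are doubly stochastic with hazards $\lambda_C,\lambda_I$ that are conditionally independent given $\mathcal{F}_\infty$. Conditioning first on $\mathcal{F}_\infty$ gives
\[
\mathbb{Q}(\tau_C>s,\tau_I>T_{i+1}|\mathcal{F}_\infty)=e^{-\int_0^{s}\lambda_C}\,e^{-\int_0^{T_{i+1}}\lambda_I}.
\]
Equivalently, $\tau=\tau_C\wedge\tau_I$ restricted to these events behaves like a doubly stochastic time. Applying Corollary \ref{corollary:corollary_key_lemma}, extended to two independent default times through Lemma \ref{lemma:the_key_lemma} with $P(\tau>t|\mathcal{F}_t)=e^{-\int_0^t(\lambda_C+\lambda_I)}$, we obtain on $\{\tau>t\}$
\[
\mathbb{E}[X_i\mathbf{1}_{\{\tau_I>T_{i+1}\}}\mathbf{1}_{\{\tau_C>s\}}|\mathcal{G}_t]=\mathbb{E}\big[X_i\,e^{-\int_t^{T_{i+1}}\lambda_I}e^{-\int_t^{s}\lambda_C}\big|\mathcal{F}_t\big].
\]
One point needs care here: $X_i$ is only $\mathcal{F}_{T_i}$-measurable, so the corollary has to be applied with $T=T_{i+1}$, which is allowed since $\mathcal{F}_{T_i}\subset\mathcal{F}_{T_{i+1}}$. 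Taking the difference over $s=T_i$ and $s=T_{i+1}$ and summing over $i$ gives \eqref{eq:CVA_pricing_formula_under_market_filtration_v2}. An alternative route is to start from Definition \ref{prop:cva_dva_1}: there $\int_{T_i}^{T_{i+1}}\lambda_C(u)e^{-\int_t^u(\lambda_C+\lambda_I)}du$ is approximated by freezing $e^{-\int_t^u\lambda_I}\approx e^{-\int_t^{T_{i+1}}\lambda_I}$ on the bucket, after which the integral of $\lambda_C e^{-\int\lambda_C}$ is exactly $e^{-\int_t^{T_i}\lambda_C}-e^{-\int_t^{T_{i+1}}\lambda_C}$. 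I would present this second route as the main argument, since it is immediate from the stated definition.

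The main obstacle is justifying the step where $\tau_I>\tau_C$ is replaced by $\tau_I>T_{i+1}$, or equivalently where $\lambda_I$'s survival is frozen at the bucket endpoint, and quantifying the error. I would bound it by
\[
\mathbb{E}\Big[X_i\int_{T_i}^{T_{i+1}}\lambda_C(u)\big(e^{-\int_t^u\lambda_I}-e^{-\int_t^{T_{i+1}}\lambda_I}\big)du\Big]\le \mathbb{E}\Big[X_i\int_{T_i}^{T_{i+1}}\lambda_C\int_{T_i}^{T_{i+1}}\lambda_I\Big].
\]
The CIR moments are finite, so this is $O(\Delta^2)$ per bucket and $O(\Delta)$ in total, which vanishes as the grid is refined. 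This justifies the $\approx$ in the statement.
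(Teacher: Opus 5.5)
Your proposal is correct. Its main route is essentially what the paper intends when it says the proposition ``follows'' from Definition~\ref{prop:cva_dva_1} under conditional independence: you freeze the investor survival factor $e^{-\int_t^u\lambda_I(s)ds}$ at $T_{i+1}$ and then integrate $\lambda_C(u)e^{-\int_t^u\lambda_C(s)ds}$ exactly over $[T_i,T_{i+1})$. The paper itself only defers to \cite{brigo2013counterparty} and \cite{green2015xva}, so your per-bucket $O(\Delta^2)$ error bound and key-lemma derivation add rigor it omits. One caveat: this step needs the exponent $-\int_t^u(\lambda_C(s)+\lambda_I(s))\,ds$, which you use correctly, but the paper's displayed Definition~\ref{prop:cva_dva_1} prints it with a plus sign and no $du$, so flag that correction rather than calling the result immediate from the definition as stated.
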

The proof again can be found in \cite{green2015xva} and \cite{brigo2013counterparty}.

\subsection{Counterparty Risk in Power Purchase Agreements (PPA)}
\label{sec:the_counterparty_credit_risk_in_ppa}

We now apply this framework to PPAs, maintaining the point of view of the offtaker, as in the payoff function introduced in Definition \ref{def:PPA_math}. In terms of notation, the PPA offtaker (the buyer) has the role of the investor $I$, while the wind power producer (the seller), has the role of the corporation $C$.

\begin{Proposition}\label{prop:cva_general}
The CVA at time $t$ of a wind power PPA with constant price $K$ and settlement and delivery times $T_j$ is 
\begin{align}
    \nonumber
    \mathrm{CVA(t,T)} &\approx  \mathrm{LGD_C}\sum_{i=0}^{n_b -1}\mathbb{E}^{\mathbb{Q}}\bigg[D(t,T_i)(\mathbb{E}^\mathbb{Q}[\Pi(T_i, T)|\mathcal{F}_{T_i}]^+) 
        e^{-\int_t^{T_{i+1}} \lambda_I(s)ds}\bigg(e^{-\int_t^{T_{i}} \lambda_C(s)ds} + \\
        \nonumber
        & \quad - e^{-\int_t^{T_{i+1}} \lambda_C(s)ds} \bigg)
        \bigg|\mathcal{F}_t \bigg] 
    \\
    \nonumber
    & \approx \mathrm{LGD_C}\sum_{i=0}^{n_b -1}\mathbb{E}^{\mathbb{Q}}\bigg[D(t,T_i)\bigg(\mathbb{E}^\mathbb{Q}\bigg[ \sum_{j=i}^n Q(T_j, W(T_j))\big(S(T_j) - K\big) \times
    \\
    \nonumber
     & \quad \times D(T_i,T_j) \bigg|\mathcal{F}_{T_i}\bigg]^+\bigg)  e^{-\int_t^{T_{i+1}} \lambda_I(s)ds}\bigg(e^{-\int_t^{T_{i}} \lambda_C(s)ds} - e^{-\int_t^{T_{i+1}} \lambda_C(s)ds} \bigg)
        \bigg|\mathcal{F}_t \bigg] 
    \\
    \nonumber
    & \approx \mathrm{LGD_C}\sum_{i=0}^{n_b -1}\mathbb{E}^{\mathbb{Q}}\bigg[D(t,T_i)\bigg(\mathbb{E}^\mathbb{Q}\bigg[\alpha \sum_{j=i}^n W(T_j)^3\mathbf{1}_{W(T_j)\in[a,b]} \big(S(T_j) - K\big) \times \\
    \label{eq:final_formula_cva_ppa}
    & \quad \times D(T_i,T_j) \bigg|\mathcal{F}_{T_i}\bigg]^+\bigg)  e^{-\int_t^{T_{i+1}} \lambda_I(s)ds} \bigg(e^{-\int_t^{T_{i}} \lambda_C(s)ds} - e^{-\int_t^{T_{i+1}} \lambda_C(s)ds} \bigg)
        \bigg|\mathcal{F}_t \bigg],
\end{align}
where $\bigcup_{i=0}^{n_b -1} [T_i, T_{i+1}) = [t, T_n)$.
\end{Proposition}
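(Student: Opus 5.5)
The first line of the statement is exactly Eq. \eqref{eq:CVA_pricing_formula_under_market_filtration_v2} of Proposition \ref{prop:cva_dva_2}. The plan is therefore to invoke that proposition for the offtaker--producer pair, with the offtaker in the role of investor $I$ and the producer in the role of corporation $C$. The whole proof then reduces to identifying the residual contract value $\mathbb{E}^\mathbb{Q}[\Pi(T_i,T)|\mathcal{F}_{T_i}]$ at each bucket start $T_i$ in terms of the wind PPA primitives. Nothing about the credit model changes. We keep the conditional independence of $\lambda_I,\lambda_C$ from \eqref{eq:hazard_rate_buyer}--\eqref{eq:hazard_rate_seller}, and the bucketing convention that liquidation happens at $T_i$, exactly as in Proposition \ref{prop:cva_dva_2}. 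The approximation signs are inherited from that discretisation.

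Second step: write out the residual value. By Definition \ref{def:PPA_discount}, applied with valuation time $T_i$ instead of $t$, the discounted payoff of the cash flows still outstanding at $T_i$ is $\sum_{j} Q(T_j)(S(T_j)-K)D(T_i,T_j)$. The sum runs over the settlement dates $T_j\ge T_i$, which the statement indexes as $j=i,\dots,n$, tacitly aligning the bucket grid with the settlement grid. Cash flows settled before $T_i$ have already been exchanged and do not enter the exposure. This is the justification for truncating the sum at $j=i$. Substituting gives the second line.

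Third step: substitute the wind production function \eqref{eq:wind_power_prod}, $Q(T_j,W(T_j))=\alpha W(T_j)^3\mathbf{1}_{\{W(T_j)\in[a,b]\}}$, inside the inner conditional expectation. This yields \eqref{eq:final_formula_cva_ppa}. Linearity lets us pull $\alpha$ out of the sum, but not out of the positive part in general. Since $\alpha>0$, it could also be factored out of $(\cdot)^+$; either form is acceptable.

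The main subtlety, rather than an obstacle, is the indexing consistency between buckets $[T_i,T_{i+1})$ and settlement dates $T_j$. If the grids differ, the inner sum should run over $\{j: T_j\ge T_i\}$. I would state this convention explicitly, noting that $\bigcup_i[T_i,T_{i+1})=[t,T_n)$. I would also remark that the inner expectation is precisely $\sum_{j\ge i}UPPA(T_i,T_j,W(T_i),S(T_i))$, so it can later be evaluated with Propositions \ref{prop:val_gauss} or \ref{prop:val_jump} at time $T_i$.
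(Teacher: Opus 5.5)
Your proposal is correct and follows essentially the same route as the paper, whose proof consists only of substituting the wind-PPA valuation \eqref{eq:pricing_wind_ppa_definition} into Eq.~\eqref{eq:CVA_pricing_formula_under_market_filtration_v2}, evaluated at $T_i$ with $Q(T_j,W(T_j))=\alpha W(T_j)^3\mathbf{1}_{\{W(T_j)\in[a,b]\}}$. Your remarks on aligning the bucket grid with the settlement dates and on truncating the sum at $j=i$ state explicitly conventions that the paper leaves tacit.
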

\begin{proof}
    The proof is limited to the substitution of Eq. \eqref{eq:pricing_wind_ppa_definition} inside Eq. \eqref{eq:CVA_pricing_formula_under_market_filtration_v2}.
\end{proof}
\begin{corollary}
The CVA at time $t$ under the Gaussian model can then be obtained by replacing 
$$ \mathbb{E}^\mathbb{Q}\bigg[\alpha\sum_{j=i}^n W(T_j)^3\mathbf{1}_{W(T_j)\in[a,b]} \big(S(T_j) - K\big)D(T_i,T_j) \bigg|\mathcal{F}_{T_i}\bigg] $$ 
with the explicit solution of 
$\mathbb{E}_t^\mathbb{Q}[\Pi(t,T,W(t),S(t))]$ from Proposition \ref{prop:val_gauss}, where $t=T_i$.
\end{corollary}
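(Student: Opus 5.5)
The corollary follows from Proposition \ref{prop:cva_general} once we note that the inner conditional expectation is the same object that Proposition \ref{prop:val_gauss} computes, only started at $T_i$ instead of at $t$. The plan has three steps.

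First, fix a bucket index $i$ and look at the inner quantity
$$\mathbb{E}^\mathbb{Q}\Big[\alpha\sum_{j=i}^n W(T_j)^3\mathbf{1}_{W(T_j)\in[a,b]}\big(S(T_j)-K\big)D(T_i,T_j)\Big|\mathcal{F}_{T_i}\Big].$$
Interest rates are deterministic, as in the proof of Proposition \ref{prop:val_gauss}, so each factor $D(T_i,T_j)$ comes out of the expectation. By linearity the quantity splits into the sum $\sum_{j} UPPA(T_i,T_j,W(T_i),S(T_i))$ of Definition \ref{def:UPPA}.

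Second, we use the Markov property of the Gaussian model. Equations \eqref{eq:wind_sol_gauss} and \eqref{eq:electricity_sol_gauss} were derived for an arbitrary initial time $t$. Applied with initial time $T_i$, they express $W(T_j)$ and $S(T_j)$ as
\[
\text{(an $\mathcal{F}_{T_i}$-measurable function of $Y(T_i)$ and $X(T_i)$)} \times \exp\Big\{\textstyle\int_{T_i}^{T_j}(\cdots)\,dZ\Big\}.
\]
The stochastic integrals are jointly Gaussian, independent of $\mathcal{F}_{T_i}$, and have correlation $\rho$. The computation behind Proposition \ref{prop:val_gauss} therefore goes through verbatim with $t$ replaced by $T_i$ everywhere. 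The only $\mathcal{F}_{T_i}$-dependence enters through $Y(T_i)$ and $X(T_i)$, and it does so in the exponents and in the bounds $\tilde a,\tilde b,\bar a,\bar b$. This shows that the inner expectation equals \eqref{eq:final_ppa_gaussian} evaluated at $t=T_i$.

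Third, we substitute this expression into \eqref{eq:final_formula_cva_ppa}. The positive part, the outer expectation, and the survival weights $e^{-\int\lambda}$ are left unchanged. The outer expectation still involves the joint law of $(Y(T_i),X(T_i))$ and the CIR intensities, so this step gives a substitution and not a fully closed form.

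The only delicate point is index bookkeeping. Only the settlement dates with $T_j\ge T_i$ should contribute; the statement writes $j\ge i$, and this requires the bucket grid to be aligned with the settlement grid. In addition, the stray $T$ that appears in place of $T_j$ in some exponents of \eqref{eq:final_ppa_gaussian} must be read as $T_j$ for the shifted formula to be consistent. I would state both conventions explicitly. Beyond that, no new estimate is needed.
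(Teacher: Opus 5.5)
Your proposal is correct and follows the same route as the paper, which gives no separate proof and treats the corollary as a direct substitution. The inner conditional expectation is Proposition~\ref{prop:val_gauss} re-based at $t=T_i$, which is legitimate because the OU solutions \eqref{eq:wind_sol_gauss} and \eqref{eq:electricity_sol_gauss} hold from any starting time with Brownian increments independent of $\mathcal{F}_{T_i}$ and rates are deterministic, and this is then plugged into \eqref{eq:final_formula_cva_ppa}. Your caveats are apt (the stray $T$ in the $\frac{9\sigma_W^2}{4\kappa_W}$ exponent must read $T_j$, and the sum can start at $j=i$ only if the bucket grid coincides with the settlement dates), and one more is worth adding: for the term with $T_j=T_i$ the closed form degenerates to zero variance, so $\Phi(\tilde b)-\Phi(\tilde a)$ and $\Phi(\bar b)-\Phi(\bar a)$ must be read as their limit, the indicator $\mathbf{1}_{\{W(T_i)\in[a,b]\}}$.
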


\begin{corollary}
The CVA at time $t$ under the jump-diffusion Ornstein-Uhlenbeck model can then be obtained by replacing 
$$\mathbb{E}^\mathbb{Q}\bigg[\alpha\sum_{j=i}^n W(T_j)^3\mathbf{1}_{W(T_j)\in[a,b]} \big(S(T_j) - K\big)D(T_i,T_j) \bigg|\mathcal{F}_{T_i}\bigg] $$ 
with the explicit solution of 
$\mathbb{E}_t^\mathbb{Q}[\Pi(t,T,W(t),S(t))]$ from Proposition \ref{prop:val_jump}, where $t=T_i$.
\end{corollary}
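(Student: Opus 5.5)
The corollary says that, in the jump-diffusion Ornstein--Uhlenbeck setting, the inner conditional expectation in Eq. \eqref{eq:final_formula_cva_ppa} is just the residual-contract value of Proposition \ref{prop:val_jump}, evaluated at the bucket start $T_i$. The plan is to show this by conditioning on $\mathcal{F}_{T_i}$ and using the Markov structure of the model.

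\textbf{Step 1: rewrite the inner quantity.} Fix a bucket index $i$. By linearity and Definition \ref{def:UPPA}, the inner expectation equals
\[
\sum_{j=i}^n UPPA(T_i, T_j, W(T_i), S(T_i)).
\]
Two facts make this legitimate. Interest rates are deterministic, so $D(T_i,T_j)$ factors out of the expectation. The pair $(Y,X)$ is a time-homogeneous Markov process, being a two-dimensional jump-diffusion OU process with correlated Brownian drivers and compound Poisson jumps. Hence the $\mathcal{F}_{T_i}$-conditional law of $(Y(T_j),X(T_j))$ depends only on $(Y(T_i),X(T_i))$ and on $T_j-T_i$.

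\textbf{Step 2: reuse the proof of Proposition \ref{prop:val_jump} with $T_i$ in place of $t$.} Solve the SDEs from $T_i$ to $T_j$. This gives $Y(T_j)$ as $Y(T_i)e^{-\kappa_W(T_j-T_i)}+\theta_W(1-e^{-\kappa_W(T_j-T_i)})$ plus a Brownian integral and a jump integral, both independent of $\mathcal{F}_{T_i}$; $X(T_j)$ has the same form. These integrals are exactly the variables $U,V,A,B$ used in that proof, now with $t$ replaced by $T_i$. Their characteristic functions $\varphi_{U,A}$, $\varphi_U$, $\varphi_V$ keep the same form, with horizon $T_j-T_i$.

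Two further ingredients carry over unchanged: the exponential-tilting argument that produces $\bar{\varphi}_{U+V}$ and $\tilde{\varphi}_{U+V}$, and the Gil-Pelaez inversion for the indicator $\mathbf{1}_{\{W(T_j)\in[a,b]\}}$. The only change is in the thresholds: $\tilde{a}$ and $\tilde{b}$ are now built from $Y(T_i)$ rather than $Y(t)$. The result is precisely formula \eqref{eq:final_ppa_jump} with $t=T_i$, summed over $j\ge i$.

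\textbf{Step 3: substitute into the CVA.} Insert this $\mathcal{F}_{T_i}$-measurable function of $(Y(T_i),X(T_i))$ into Proposition \ref{prop:cva_general}, take its positive part, and leave the outer expectation over $(Y(T_i),X(T_i),\lambda_I,\lambda_C)$ untouched. This yields the stated CVA representation.

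\textbf{Main obstacle.} The delicate step is Step 2. One must check that the increments after $T_i$ of the compound Poisson processes and of the Brownian motions are independent of $\mathcal{F}_{T_i}$. This requires the filtration to be generated by $Z^W$, $Z^S$, $L^W$, $L^S$, and the hazard-rate drivers, so that conditioning does not alter their law.

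I would also address the index convention. Only the settlement dates $T_j \ge T_i$ should enter the residual value, and the bucket grid and settlement grid should be aligned. When $T_j=T_i$, the payoff term is simply $\mathcal{F}_{T_i}$-measurable, so it can be included directly as a degenerate case of the formula.
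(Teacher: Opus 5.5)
Your proposal is correct and takes essentially the paper's route. The paper gives no separate proof: the corollary follows by substituting the Proposition~\ref{prop:val_jump} formula, re-based at $t=T_i$ and restricted to the settlement dates $T_j \ge T_i$, into Proposition~\ref{prop:cva_general}. Your Markov and independent-increments justification, together with your remark on aligning the bucket and settlement grids, makes explicit what the paper leaves implicit.
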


\begin{Proposition}\label{prop:dva_general}
The DVA at time $t$ of a wind power PPA with constant price $K$ and settlement and delivery times $T_j$ is 
\begin{align}
    \nonumber
    \mathrm{DVA(t,T)} &\approx  \mathrm{LGD_I}\sum_{i=0}^{n_b -1}\mathbb{E}^{\mathbb{Q}}\bigg[D(t,T_i)(\mathbb{E}^\mathbb{Q}[-\Pi(T_i, T)|\mathcal{F}_{T_i}]^+) 
        e^{-\int_t^{T_{i+1}} \lambda_C(s)ds}\bigg(e^{-\int_t^{T_{i}} \lambda_I(s)ds} + \\
    \nonumber
    & \quad - e^{-\int_t^{T_{i+1}} \lambda_I(s)ds} \bigg)
        \bigg|\mathcal{F}_t \bigg] 
    \\
    \nonumber
    & \approx \mathrm{LGD_I}\sum_{i=0}^{n_b -1}\mathbb{E}^{\mathbb{Q}}\bigg[D(t,T_i)\bigg(\mathbb{E}^\mathbb{Q}\bigg[- \sum_{j=i}^n Q(T_j, W(T_j))\big(S(T_j) - K\big) \times
    \\
    \nonumber
     & \quad \times D(T_i,T_j) \bigg|\mathcal{F}_{T_i}\bigg]^+\bigg)  e^{-\int_t^{T_{i+1}} \lambda_C(s)ds}\bigg(e^{-\int_t^{T_{i}} \lambda_I(s)ds} - e^{-\int_t^{T_{i+1}} \lambda_I(s)ds} \bigg)
        \bigg|\mathcal{F}_t \bigg] 
    \\
    \nonumber
    & \approx \mathrm{LGD_I}\sum_{i=0}^{n_b -1}\mathbb{E}^{\mathbb{Q}}\bigg[D(t,T_i)\bigg(-\mathbb{E}^\mathbb{Q}\bigg[\alpha \sum_{j=i}^n W(T_j)^3\mathbf{1}_{W(T_j)\in[a,b]} \big(S(T_j) - K\big)
    \\
    \label{eq:final_formula_dva_ppa}
     & \quad \times D(T_i,T_j) \bigg|\mathcal{F}_{T_i}\bigg]\bigg)^+ e^{-\int_t^{T_{i+1}} \lambda_C(s)ds}\bigg(e^{-\int_t^{T_{i}} \lambda_I(s)ds} - e^{-\int_t^{T_{i+1}} \lambda_I(s)ds} \bigg)
    \bigg|\mathcal{F}_t \bigg]
\end{align}
\end{Proposition}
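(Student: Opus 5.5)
The argument mirrors the one given for Proposition \ref{prop:cva_general}, with the roles of the two parties exchanged. We start from the DVA expression under the market filtration in Proposition \ref{prop:cva_dva_2}, Eq. \eqref{eq:DVA_pricing_formula_under_market_filtration_v2}. That expression already contains the bucketed survival structure: the factor $e^{-\int_t^{T_{i+1}}\lambda_C(s)ds}$ is the probability that the producer survives the bucket, and $e^{-\int_t^{T_i}\lambda_I(s)ds}-e^{-\int_t^{T_{i+1}}\lambda_I(s)ds}$ is the probability that the offtaker defaults in $[T_i,T_{i+1})$. This is justified by the conditional independence of $\lambda_I$ and $\lambda_C$ and by Corollary \ref{corollary:corollary_key_lemma}. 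The first line of the statement is therefore just this formula rewritten.

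The second step identifies the residual contract value at liquidation time $T_i$. By Definition \ref{def:PPA_discount}, the cash flows still outstanding at $T_i$ are those with settlement dates $T_j\ge T_i$. Their value discounted to $T_i$ is $\sum_{j\ge i} Q(T_j,W(T_j))(S(T_j)-K)D(T_i,T_j)$. So $\Pi(T_i,T)$ is replaced by this sum, mapping bucket endpoints to settlement indices in the same way as in Proposition \ref{prop:cva_general}. We insert a minus sign inside the conditional expectation. By linearity, $\mathbb{E}^\mathbb{Q}[-\Pi(T_i,T)|\mathcal{F}_{T_i}]=-\mathbb{E}^\mathbb{Q}[\Pi(T_i,T)|\mathcal{F}_{T_i}]$. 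This is what lets the minus sign move outside the inner expectation in the last line, so the positive part is applied to $-\mathbb{E}^\mathbb{Q}[\cdots]$.

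The third step substitutes the wind production function \eqref{eq:wind_power_prod}, $Q(T_j,W(T_j))=\alpha W(T_j)^3\mathbf{1}_{\{W(T_j)\in[a,b]\}}$, which gives exactly \eqref{eq:final_formula_dva_ppa}. Equivalently, this step substitutes Eq. \eqref{eq:pricing_wind_ppa_definition} with $t$ replaced by $T_i$.

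No step is a real obstacle; the only point needing care is bookkeeping. We must make sure the inner conditional expectation is $\mathcal{F}_{T_i}$-measurable. That ensures it remains inside the outer $\mathcal{F}_t$-expectation together with the hazard-rate factors, which are not independent of it in general. We must also make sure the sign convention matches the DVA definition, i.e. the offtaker's gain arises when the contract value to the offtaker is negative. The approximation symbol is inherited from the bucketing and liquidation-at-$T_i$ assumption behind Eq. \eqref{eq:approximated_bva_formula}, so no new error term has to be controlled.
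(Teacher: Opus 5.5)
Your proposal is correct and matches the paper's argument, whose proof consists only of substituting the wind-PPA payoff \eqref{eq:pricing_wind_ppa_definition} (evaluated at $T_i$ and restricted to the outstanding cash flows $j\ge i$) into the bucketed DVA formula \eqref{eq:DVA_pricing_formula_under_market_filtration_v2}. Your extra bookkeeping is consistent with that proof: moving the minus sign by linearity, the $\mathcal{F}_{T_i}$-measurability of the inner expectation, and inheriting the approximation from the earlier proposition all make explicit what the paper leaves implicit.
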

\begin{proof}
    The proof is limited to the substitution of Eq. \eqref{eq:pricing_wind_ppa_definition} inside Eq. \eqref{eq:DVA_pricing_formula_under_market_filtration_v2}.
\end{proof}
\begin{corollary}
The DVA at time $t$ under the Gaussian model can then be obtained by replacing 
$$ \mathbb{E}^\mathbb{Q}\bigg[\alpha\sum_{j=i}^n W(T_j)^3\mathbf{1}_{W(T_j)\in[a,b]} \big(S(T_j) - K\big)D(T_i,T_j) \bigg|\mathcal{F}_{T_i}\bigg] $$
with the explicit solution of 
$\mathbb{E}_t^\mathbb{Q}[\Pi(t,T,W(t),S(t))]$ from Proposition \ref{prop:val_gauss}, where $t=T_i$.
\end{corollary}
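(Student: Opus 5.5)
The corollary is a direct substitution, mirroring the CVA corollary for the Gaussian model. I start from the last line of Eq. \eqref{eq:final_formula_dva_ppa} in Proposition \ref{prop:dva_general}. There the only model-dependent quantity inside the outer expectation is the inner conditional expectation
$$\mathbb{E}^\mathbb{Q}\bigg[\alpha\sum_{j=i}^n W(T_j)^3\mathbf{1}_{W(T_j)\in[a,b]}\big(S(T_j)-K\big)D(T_i,T_j)\bigg|\mathcal{F}_{T_i}\bigg].$$
It enters through $(-\,\cdot\,)^+$, multiplied by the $\mathcal{F}$-adapted quantities $D(t,T_i)$ and the survival-probability terms. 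The plan is to show that this inner quantity is exactly the discounted expected payoff of Eq. \eqref{eq:pricing_wind_ppa_definition}, evaluated with valuation date $T_i$ and restricted to the remaining settlement dates $T_j\ge T_i$. Proposition \ref{prop:val_gauss} then supplies its closed form.

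\textbf{Step 1.} The Gaussian model is time-homogeneous Markov: $(Y,X)$ is a bivariate Ornstein--Uhlenbeck process with correlated Brownian drivers. The solutions \eqref{eq:wind_sol_gauss} and \eqref{eq:electricity_sol_gauss} hold with $t$ replaced by any intermediate date $T_i$, using the increments of $Z^W,Z^S$ on $[T_i,T_j]$. These increments are independent of $\mathcal{F}_{T_i}$ and jointly Gaussian with the same covariance structure. Hence, conditionally on $\mathcal{F}_{T_i}$, the law of $(W(T_j),S(T_j))$ is that used in the proof of Proposition \ref{prop:val_gauss}, with $(t,Y(t),X(t))$ replaced by $(T_i,Y(T_i),X(T_i))$.

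\textbf{Step 2.} With deterministic interest rates, as assumed in the proof of Proposition \ref{prop:val_gauss}, $D(T_i,T_j)$ factors out. Applying that proposition term by term for $j\ge i$ gives the inner expectation as formula \eqref{eq:final_ppa_gaussian} with $t=T_i$. The constants $\tilde a,\tilde b,\bar a,\bar b$ are evaluated at $(T_i,Y(T_i),X(T_i))$, and the sum runs over $j=i,\dots,n$, or only over $T_j>T_i$ depending on the bucket convention. Because $K$ is fixed at inception, it is a constant throughout. The result is an explicit measurable function of $(Y(T_i),X(T_i))$.

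\textbf{Step 3.} I substitute this function into Eq. \eqref{eq:final_formula_dva_ppa} and apply the negative-part operator to the closed form. The remaining outer $\mathcal{F}_t$-expectation, over $(Y(T_i),X(T_i))$ and the CIR intensities $\lambda_C,\lambda_I$, is left as is; conditional independence makes it tractable by simulation.

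The only delicate point is Step 1. One must check that conditioning on $\mathcal{F}_{T_i}$ rather than $\mathcal{F}_t$ changes nothing except the initial state, including the horizon-dependent variance and covariance terms $1-e^{-2\kappa(T_j-T_i)}$ and $1-e^{-(\kappa_W+\kappa_S)(T_j-T_i)}$. This follows from the Markov property and the independent increments of Brownian motion.
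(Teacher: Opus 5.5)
Your proposal is correct and follows the paper's route. The paper offers no argument beyond substituting the closed form of Proposition \ref{prop:val_gauss} into Eq.~\eqref{eq:final_formula_dva_ppa} (evaluated at valuation date $T_i$ over the remaining settlement dates, with $K$ fixed and deterministic rates) and taking the negative part; your Markov and independent-increments check in Step 1 makes explicit what the paper leaves implicit, namely that the proposition holds for an arbitrary valuation time and so applies at $T_i$ with state $(Y(T_i),X(T_i))$.
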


\begin{corollary}
The DVA at time $t$ under the jump-diffusion Ornstein-Uhlenbeck model can then be obtained by replacing 
$$ \mathbb{E}^\mathbb{Q}\bigg[\alpha\sum_{j=i}^n W(T_j)^3\mathbf{1}_{W(T_j)\in[a,b]} \big(S(T_j) - K\big)D(T_i,T_j) \bigg|\mathcal{F}_{T_i}\bigg] $$
with the explicit solution of 
$\mathbb{E}_t^\mathbb{Q}[\Pi(t,T,W(t),S(t))]$ from Proposition \ref{prop:val_jump}, where $t=T_i$.
\end{corollary}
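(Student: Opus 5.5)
The statement is the last corollary: the DVA under the jump-diffusion Ornstein--Uhlenbeck model is obtained by inserting the semi-analytical value from Proposition \ref{prop:val_jump}, evaluated at $t=T_i$, into Proposition \ref{prop:dva_general}. The plan is to start from the last line of Eq. \eqref{eq:final_formula_dva_ppa} and identify the inner $\mathcal{F}_{T_i}$-conditional expectation with the discounted expected payoff of Eq. \eqref{eq:pricing_wind_ppa_definition}, with the valuation date moved from $t$ to $T_i$.

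\textbf{Step 1: time-homogeneity of the valuation formula.} I will check that Proposition \ref{prop:val_jump} holds for an arbitrary valuation date $s$, and not only for the inception date $t$. Its proof uses only four ingredients: the explicit solution of the jump OU dynamics for $Y$ and $X$ started at time $s$; the Markov property of $(Y,X)$ with respect to $\mathcal{F}_s$; independence of the Brownian and compound Poisson increments after $s$ from $\mathcal{F}_s$; and deterministic interest rates, so that $D(s,T_j)$ factors out. All of these hold for any $s\ge t$. Replacing $t$ by $T_i$ therefore gives
\[
\mathbb{E}^\mathbb{Q}\Big[\alpha\sum_{j} W(T_j)^3\mathbf{1}_{\{W(T_j)\in[a,b]\}}\big(S(T_j)-K\big)D(T_i,T_j)\Big|\mathcal{F}_{T_i}\Big]
\]
as a deterministic function of $(Y(T_i),X(T_i))$. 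Here $\tilde a,\tilde b$ and the conditional moments $\mathbb{E}[e^B]$ are recomputed with $T_j-T_i$ in place of $T_j-t$.

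\textbf{Step 2: the sum runs over future settlement dates only.} Only dates with $T_j\ge T_i$ enter the inner sum, as written in Proposition \ref{prop:dva_general}. Each summand is a UPPA, so linearity of the conditional expectation lets me apply Proposition \ref{prop:val_jump} termwise. The fixed price $K$ is the one set at inception by Proposition \ref{prop:K_jump}, so it is a constant at time $T_i$ and passes through unchanged.

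\textbf{Step 3: apply the negative part and assemble.} I take the negative part $(-\,\cdot\,)^+$ of this function of the state and substitute it into the outer $\mathcal{F}_t$-expectation. The resulting expression still contains the survival factors driven by the CIR intensities \eqref{eq:hazard_rate_buyer}--\eqref{eq:hazard_rate_seller}. These factors enter jointly with the exposure inside the same expectation, so the outer expectation is left as it stands, to be evaluated by simulation. This yields exactly the claimed representation.

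\textbf{Main obstacle.} The delicate point is Step 1. One must make sure that the characteristic functions $\varphi_{U,A},\varphi_V$ and the Gil-Pelaez inversion integrals stay well defined when the horizon $T_j-t$ is replaced by $T_j-T_i$. This requires the exponential moments $\mathbb{E}[e^{3(U+V)+A}]$ to remain finite, which holds for Gaussian jump sizes and finite jump intensity. One must also check the degenerate term $T_j=T_i$: there the variance vanishes, and the indicator becomes deterministic, equal to $\mathbf{1}_{\{W(T_i)\in[a,b]\}}$.
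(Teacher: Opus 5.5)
Your proposal is correct and follows the paper's own route: the corollary is just the substitution of the Proposition~\ref{prop:val_jump} value, re-evaluated at valuation date $T_i$ over the remaining settlement dates, into the inner conditional expectation of Eq.~\eqref{eq:final_formula_dva_ppa}, with the outer $\mathcal{F}_t$-expectation and its CIR survival factors left to simulation. Your extra checks make explicit what the paper leaves implicit: that the formula holds at any valuation date (Markov property, independent increments, deterministic rates), that the exponential moments are finite for Gaussian jumps, and that the degenerate $T_j=T_i$ term is handled.
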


\subsection{Pricing with CVA and DVA}\label{sec:K_with_BVA}
In Section \ref{sec:risk_neutral} the risk-neutral approach to the pricing of the PPA is introduced. Its main limitation is however highlighted, and it resides in the fact that an OTC contract must account for counterparty credit risk. After having provided explicit formulas for the CVA and DVA of such contract in Section \ref{sec:the_counterparty_credit_risk_in_ppa}, we therefore propose an updated, counterparty-risk-adjusted pricing formula in line with Eq. \eqref{eq:ppa_pricing_formula}.

\begin{definition}\label{def:credit_risk:price}
\textbf{Counterparty-risk-adjusted price of a wind power PPA}\newline
The counterparty-risk-adjusted price of a wind power PPA contract is the constant payment $\bar{K}  \in \mathbb{R}$ that makes the discounted expectation of the counterparty-risk-adjusted PPA payoff equal to zero, i.e. 
\begin{align}
    \mathbb{E}^{\mathbb{Q}}[\bar\Pi(t,T,\bar{K})|\mathcal{F}_t] &= \mathbb{E}^{\mathbb{Q}}[\Pi(t,T,\bar{K})|\mathcal{F}_t] + \mathrm{DVA}(t,T,\bar{K})-\mathrm{CVA}(t,T,\bar{K}) =0, \label{eq:counterparty_adjusted_price}
\end{align}
where $\mathbb{E}^{\mathbb{Q}}[\bar\Pi(t,T,\bar{K})|\mathcal{F}_t]$ is the counterparty-risk-adjusted PPA payoff, written as an explicit function of $K$.
\end{definition}
It is however not possible to solve Eq. \eqref{eq:counterparty_adjusted_price} for $\bar{K}$ analytically, either with the Gaussian or the Jump model, due to the non-linearity in the CVA and DVA formulas. Therefore, only numerical optimization methods can be used.

\section{An empirical illustration of CVA and DVA}\label{sec:empirical}
In this section we offer an application of the methodology developed so far. We present an example transaction between an offtaker (in our case, ENEL Distribution S.p.A.) and a wind power producer (ENI Plenitude S.p.A.). We choose these two companies because of the availability and liquidity of their CDS data, from which we can extract the default probabilities required to calibrate hazard rate models. The aim of this section is not to analyze a specific OTC contract, but to demonstrate the role and impact of the value adjustment, by using realistic input parameters. 

\subsection{Data}
The data required in this section includes average daily wind speed, which has been downloaded from \url{https://power.larc.nasa.gov/data-access-viewer/} for the location with latitude 41.2087 and longitude 16.4016. The time series of electricity prices corresponds to the Italian unique national price (PUN) and can be downloaded directly from the GME (Gestore del Mercato Elettrico) at \url{https://gme.mercatoelettrico.org/it-it/Home/Esiti/Elettricita/MGP/Esiti/PUN}. Finally, CDS data has been downloaded from Refinitiv Datastream.

\subsection{Pricing}
We being by calibrating the parameters of the models in Eq. \eqref{eq:wind_sol_gauss}, \eqref{eq:lambda_W}, \eqref{eq:electricity_sol_gauss}, \eqref{eq:lambda_S}, \eqref{eq:implicit_SDE_wind_intensity}, \eqref{eq:implicit_SDE_price_electricity}, and \eqref{eq:electricity_wind_dependence_condition}. They are reported in Table \ref{tab:gaussian_case_parameters}, for the Gaussian model, and Table \ref{tab:levy_ou__case_parameters}, for the jump-diffusion Ornstein-Uhlenbeck model.

\begin{table}[h!]
\centering
\begin{tabular}{lcc}
\hline
 & Electricity price $S(t)$ &  Wind speed $W(t)$ \\
\hline
$\mu_i$    & 4.32981      & 1.27079       \\
$a_i$              & 0.0236305    & -0.0250245    \\
$b_i$              & 0.0501226    & 0.0902814     \\
$\kappa_i$          & 0.394135    & 0.746079    \\
$\theta_i$          & -0.00026124 & 0.00160128  \\
$\sigma_i$          & 0.039872    & 0.124079    \\
\hline
$\rho$            & \multicolumn{2}{c}{-0.054} \\
\hline
\end{tabular}
\caption{Model parameters for the two underlying variables in the Gaussian case.}
\label{tab:gaussian_case_parameters}
\end{table}

The parameters $\kappa_i$, $\theta_i$, and $\sigma_i$, $i \in \{W, S \}$, correspond to the stochastic components of wind speed and electricity prices, modeled as Ornstein-Uhlenbeck processes, where $\kappa_i$ is the speed of mean reversion, $\theta_i$ is the long-term mean level, and $\sigma_i$ is the volatility. The parameter $\rho$ denotes the correlation between the two stochastic components.
The parameters $\mu_i$, $a_i$, and $b_i$, $i \in \{W, S \}$, define the deterministic seasonal components of wind speed and electricity prices modeled via the sine-cosine expansion in \eqref{eq:lambda_W} and \eqref{eq:lambda_S}.

\begin{table}[h!]
\centering
\begin{tabular}{lcc}
\hline
                 & Electricity price $S(t)$          & Wind speed $W(t)$  \\
\hline
$\mu_i$            & 4.32981     & 1.27079     \\
$a_i$              & 0.0236305   & -0.0250245  \\
$b_i$              & 0.0501226   & 0.0902814   \\
$\kappa_i$         & 0.0093718   & 0.781841    \\
$\theta_i$         & -0.281307   & 4.34914e-05 \\
$\sigma_i$         & 0.00688945  & 0.0471666   \\
\hline
$\lambda_{\text{jump},i}$ & 0.0130844   & 0.0119723   \\
$\mu_{\text{jump},i}$     & 0           & 0           \\
$\sigma_{\text{jump},i}$  & 0.0145784   & 0.000574602 \\
\hline
$\rho$           & \multicolumn{2}{c}{-0.05} \\
\hline
\end{tabular}
\caption{Model parameters for the two underlying variables in the jump-diffusion Ornstein-Uhlenbeck
model.}
\label{tab:levy_ou__case_parameters}
\end{table}

Table \ref{tab:levy_ou__case_parameters} has the additional parameters $\lambda_{jump,i}$, $\mu_{jump,i}$, and $\sigma_{jump,i}$, $i \in \{ W,S\}$, are the jump part parameters. They represent the jump arrival intensity, expected jump size, and jump volatility, respectively.
\\
The starting points for the deseasonalized processes $X(t)$ and $Y(t)$ used in the simulations containes in this section are reported in Table \ref{tab:initial_values}. The efficiency parameter of the wind-power production function in Eq. \eqref{eq:wind_power_prod} is set to $\alpha=1$, in order to provide results in a unitary scale.

\begin{table}[h!]
\centering
\begin{tabular}{lc}
\hline
Parameter & Value \\
\hline
$e^{X_0}$ & $ 1.0$ \\
$e^{Y_0}$  & $ 1.0$ \\
$\gamma_C(0)$ & 0.0091 \\
$\gamma_I(0)$ & 0.0157 \\
\hline
\end{tabular}
\caption{Initial values for the deseasonalized processes $X_t$, $Y_t$ and the hazard rates}
\label{tab:initial_values}
\end{table}

The hazard rates are calibrated on CDS data for ENEL and ENI, the assumed offtaker and producer of this setting. The parameters of the processes in Eq. \eqref{eq:hazard_rate_buyer} and \eqref{eq:hazard_rate_seller} are reported in Table \ref{tab:cir_parameters}. They are $k_j$, $\theta_j$ and $\sigma_j$, $j \in \{C, I \}$, and they correspond to the speed of mean reversion, long-term mean level, and volatility of the CIR processes, respectively.

\begin{table}[h!]
\centering
\begin{tabular}{lcc}
\hline
 & Producer ($j=C$) & Offtaker ($j=I$) \\
\hline
$k_j$    & 0.021  & 0.064 \\
$\theta_j$   & 1.702  & 1.43 \\
$\sigma_j$    & 0.201  & 0.4 \\
\hline
\end{tabular}
\caption{Cox-Ingersoll-Ross (CIR) model parameters for counterparty and investor}
\label{tab:cir_parameters}
\end{table}

We finally use the estimated parameters to perform PPA valuation (following Eq. \eqref{eq:final_ppa_gaussian} and \eqref{eq:final_ppa_jump}), pricing (following Eq. \eqref{eq:K_gauss} and \eqref{eq:K_jump}), and to compute the CVA, DVA, and BVA (following Eq. \eqref{eq:final_formula_cva_ppa} and \eqref{eq:final_formula_dva_ppa}) of a number of PPA contracts between the selected offtaker and producer. We do so for multiple different maturities and for both models: Gaussian and jump diffusion Ornstein-Uhlenbeck.   

\begin{figure}[h!]
    \centering
    \includegraphics[width=0.75\linewidth]{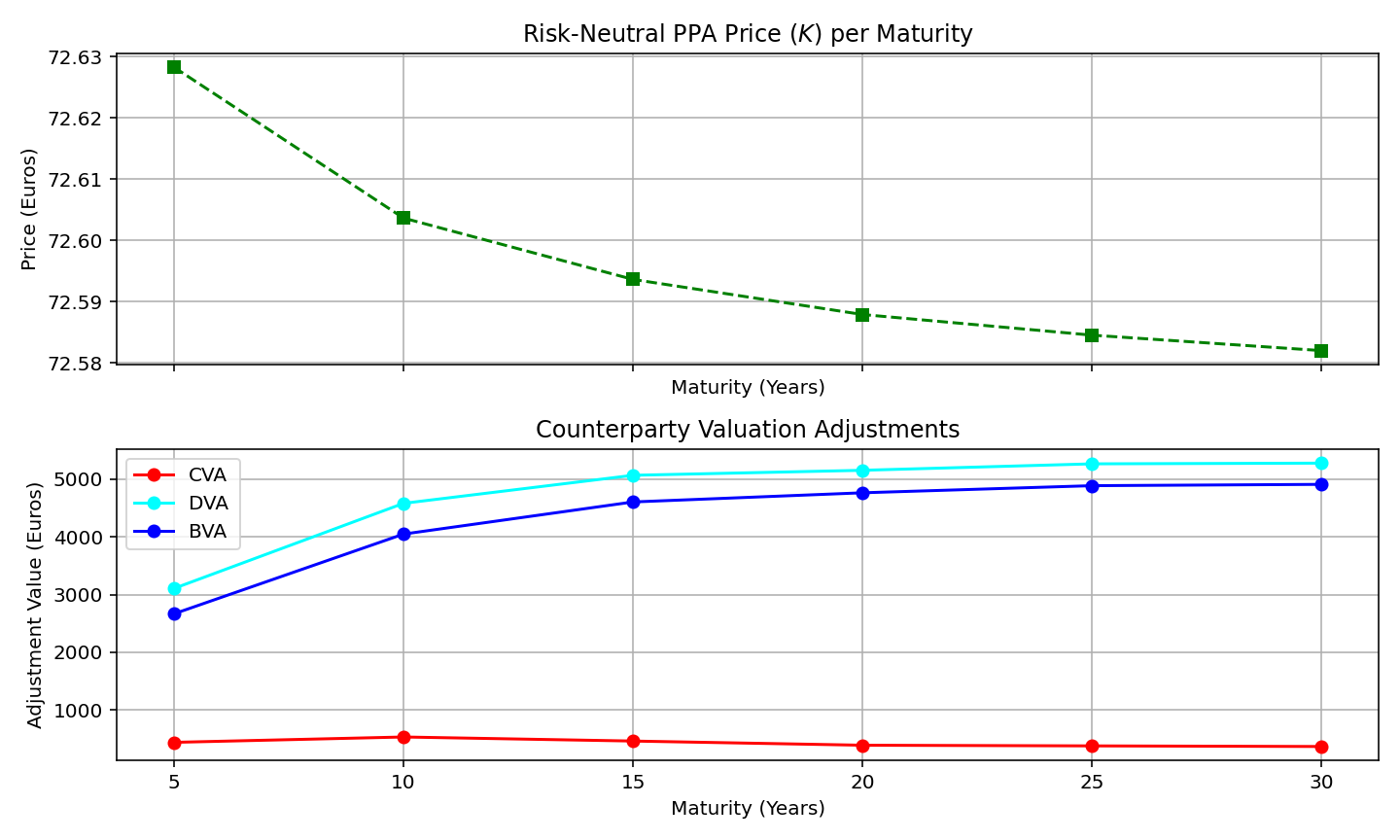}
    \caption{PPA fair price and XVA for the Gaussian case}
    \label{fig:fair_price_ppa_gaussian}
\end{figure}

\newpage

\begin{figure}[h!]
    \centering
    \includegraphics[width=0.75\linewidth]{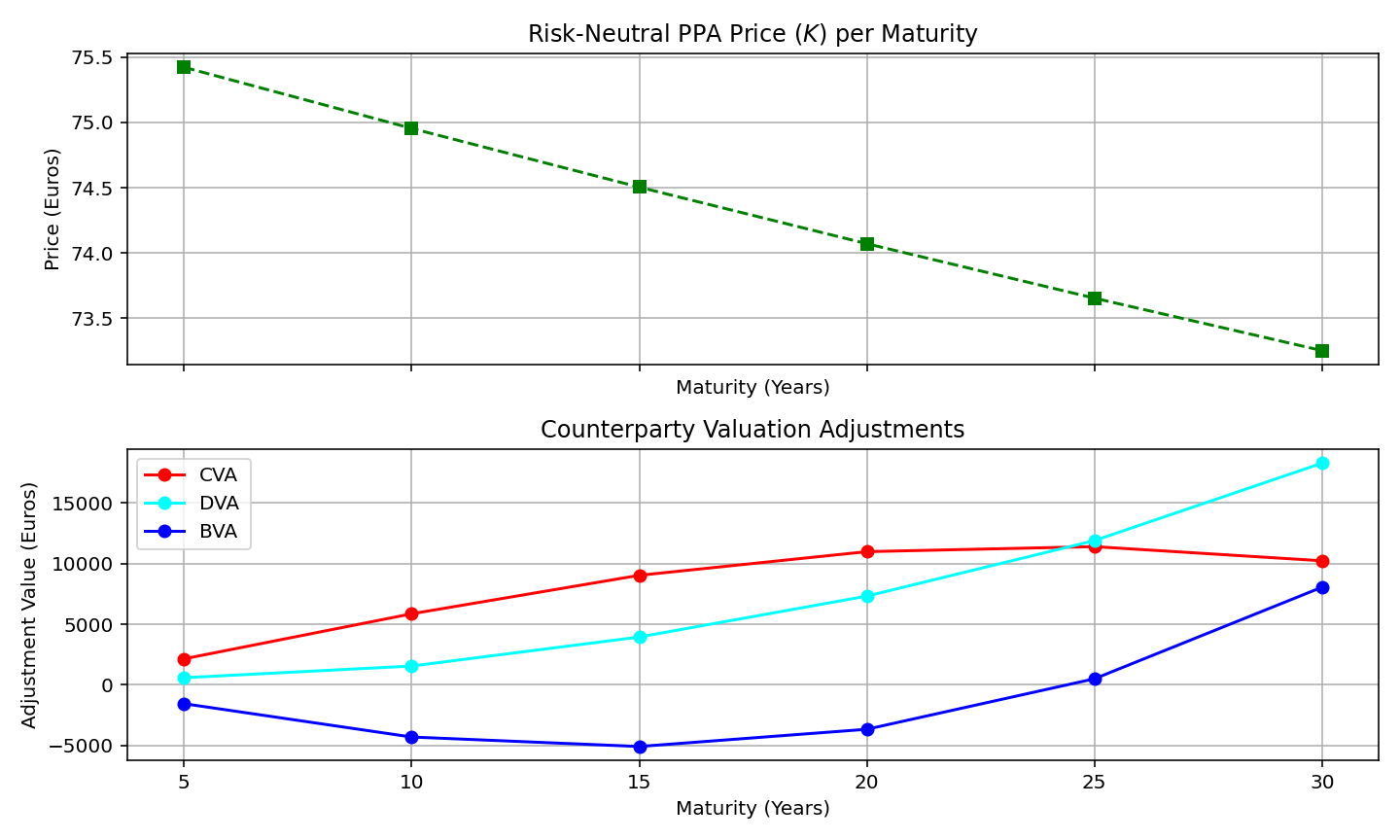}
    \caption{PPA fair price and XVA for the Levy case}
    \label{fig:fair_price_ppa_levy}
\end{figure}

The results are in Figures \ref{fig:fair_price_ppa_gaussian} and \ref{fig:fair_price_ppa_levy}. For both models we get similar PPA prices, which interestingly are also comparable with the results (for the same country, Italy) found by \cite{MENDICINO2019113577} using the LCOE-based valuation method. However, when  evaluating the BVA, we see that it is always far from zero. This highlights the main motivation behind this work: when dealing with OTC contracts and not with standardized and market-cleared financial products, defining the fairness of the contract only on the basis of the risk-neutral price will not reflect all the risky components of the transaction, as counterparty risk is excluded. Therefore, the risk-neutral price is not a guarantee of a zero-valued position, as the two parties must consider that the actual value of their positions must account for CVA and DVA. 
\\
We also finally notice that the BVAs differ noticeably depending on the underlying model assumptions. This indicates the great impact of appropriate model selection when evaluating PPAs, which must be tailored to the time series at hand. 

\section{Policy implications for the EIB PPA guarantee scheme}\label{sec:policy}
In June 2025, the European Investment Bank (EIB) approved a 500 million Euro guarantee programme to support corporate power purchase agreements for renewable energy.\footnote{European Investment Bank, ``PAN-EU POWER PURCHASE AGREEMENT GUARANTEE", Project No. 20250202, approved 19 June 2025. \url{https://www.eib.org/en/projects/all/20250202}}. The scheme was created to provide counter-guarantees that back those issued by commercial financial intermediaries, with the goal of decreasing the risk of long-term electricity contracts for offtakers. Among them, particular targets are small and medium-sized enterprises and energy-intensive industries that lack strong credit ratings.\footnote{Pexapark, ``EIB's PPA Guarantee System under the Clean Industrial Deal: How it works'', March 2025. \url{https://pexapark.com/blog/eibs-ppa-guarantee-system-under-the-clean-industrial-deal/}}. By absorbing part of the counterparty risk, the EIB helps offtakers access green energy at a stable price and enables renewable energy producers to secure a predictable demand for their output. 
\\
The work presented in this paper, evaluating and providing formulas to calculate the CVA and DVA of wind-power PPAs, can play a supporting role for policy making. These formulas incorporate the unique cash-flow characteristics of wind power production: volume uncertainty, price volatility and the correlation between wind speed and electricity prices. Within the guarantee programme, the EIB could use a netting approach (the difference between CVA and DVA) as a measure of the net credit risk that a guarantee would actually need to cover, in order to avoid over-guaranteeing bilateral risk. A guarantee that covers only the offtaker's CVA would ignore the fact that a producer with a high DVA (i.e., a high own-default risk) naturally reduces the expected loss. Consequently, the EIB could calibrate its guarantee fee based on the net amount and design tiered guarantee structures (such as first-loss versus full coverage) that reflect the offsetting effects of the two sides of the PPA.
\\
A second policy implication concerns the dynamic adjustment of guarantee terms. CVA and DVA evolve overtime, together with changes in credit ratings, market conditions and the financial health of both counterparties. This work could provide a quantitative tool for the EIB to monitor these adjustments in real time. An increase in the offtaker's CVA above a predetermined threshold, while the producer's DVA remains low, could lead to an automatic increase on the part of the EIB of the guarantee fee, or to a request for additional collateral from the offtaker. Conversely, if the producer's DVA rises sharply (indicating that the producer is becoming distressed) the EIB can reduce the guarantee coverage, because the producer is less likely to perform and call on the guarantee. A dynamic risk management approach could make use of CVA and DVA as active signals that keep the guarantee scheme aligned with the evolving risk profile of the PPA over its entire lifetime. The impact of this approach could be highlighted by comparing the total guarantee cost and expected loss under the dynamic policy versus a static one.
\\
Details about the mathematical and financial modelling behind the guarantee scheme are not currently disclosed, which prevents a reliable and sound illustration of the above-mentioned applications. We reserve a deeper analysis of the impact of our proposal for when such information is available.

\section{Conclusions}\label{sec:conclusions}

The objective of this work was to address a gap in the current literature: the quantification of counterparty credit risk in commercial renewable Power Purchase Agreements. While PPAs are widely recognised for their role in stabilising revenues and facilitating the financing of renewable energy plants, their bilateral over-the-counter structure exposes both parties to credit risk. This dimension has currently remained underexplored, particularly given the double uncertainty inherent in renewable generation, caused by unpredictable electricity prices and wind speed. 
\\
In this work, we first define the risk neutral price of a PPA under the assumption of stochastic electricity prices and stochastic wind power production. This framework captures the correlated joint dynamics of the underlying risk factors and provides a foundation for subsequent credit risk analysis. We then link the risk neutral price to the standard counterparty credit risk framework by specifying both the Credit Valuation Adjustment and the Debit Valuation Adjustment. We adapt the stochastic wind power production function to account explicitly for its link with wind speed. The resulting pricing and valuation problem leads to analytic, in the Gaussian setting, and semi-analytic ones, in the case of a jump-diffusion model for the underlying quantities.
\\
CVA and DVA provide market participants with a transparent metric for pricing counterparty credit risk into long term PPAs, thereby informing contract negotiation and collateral requirements. These adjustments are essential tools for proper risk management and allow firms to monitor exposures, set credit limits and allocate capital appropriately. Recent policy initiatives, such as the European Investment Bank's pilot guarantee scheme, aim to mitigate credit risk for PPA market players. With this work, we offer a practical tool for the valuation of counterparty credit risk adjustments, which are not only needed for the internal credit risk assessment of producers and offtakers that are not covered by the scheme, but that can also serve as a benchmark for credit risk mitigation tools, which include the very own pilot scheme by the EIB. 
We hope to contribute to the broader efforts of supporting the sustainable expansion of renewable energy through sound financial risk management.  
\\
As for the limitations of this work, the first concerns the choice of the stochastic models for the underlying random variables, which we show to have a great impact on prices and valuation adjustments. Future research could focus on the long-term horizon of these contracts, which might call for models with parameters that evolve overtime for both wind and electricity. An additional avenue to be pursued would involve the adaptation of the PPA payoff function. It could include more elaborate structures for the price to be paid by the offtaker, which can be a non-constant parameter, or a mismatch between delivery and settlement periods. A final path for future research could extend the framework to the modelling of photovoltaic PPAs, which would require entirely different models for the underlying weather driver (solar irradiance) and for the corresponding power production function.

\newpage

\bibliographystyle{plainnat}
\bibliography{references}

\newpage

\begin{appendix}

\section{PPA Pricing with the Gaussian Model}
\subsection{PPA Valuation}\label{app:ppa_gaussian_val}
\begin{proof}
Proof of Proposition \ref{prop:val_gauss}.\\
Let us assume deterministic interest rates and focus on the single UPPA price for maturity $T$, which from Def. \ref{def:UPPA} and Eq. \eqref{eq:pricing_wind_ppa_definition} is
\begin{align}\label{eq:uppa_app1}
    UPPA(t, T) &= D(t,T) \big\{\mathbb{E}_t^\mathbb{Q}[\alpha W(T)^3 \mathbf{1}_{\{W(T) \in [a,b]\}}S(T)] \notag \\ 
    &\quad - K \mathbb{E}_t^\mathbb{Q}[\alpha W(T)^3 \mathbf{1}_{\{W(T) \in [a,b]\}}]\big\}.
\end{align}

We begin by focusing on the second expected value, i.e.
\begin{align*}
    \mathbb{E}_t^\mathbb{Q}[\alpha W(T)^3\mathbf{1}_{\{W(T) \in [a,b]\}}]  &= \alpha e^{3\Lambda_W(T) + 3Y(t)e^{-\kappa_W(T-t)} + 3\theta_W(1-e^{-\kappa_W(T-t)})} \\
    &\quad  \times \mathbb{E}_t^\mathbb{Q}\bigg[ \exp\bigg\{3\sigma_W\int_t^Te^{-\kappa_W(T-s)} dZ^W(s)  \bigg\} \mathbf{1}_{\{W(T) \in [a,b]\}} \bigg]. 
\end{align*}
We divide and multiply by $\exp\{\frac{9}{2}\sigma_W^2 \int_t^T e^{-2\kappa_W(T-s)ds }\}= \exp\{\frac{9\sigma_W^2}{4\kappa_W} (1-e^{-2\kappa_W(T-t)})\}$ inside the expectation, in order to highlight 
$$ \frac{d\mathbb{\tilde{Q}}}{d\mathbb{Q}} \Bigg|_{T}= \frac{e^{3\sigma_W\int_t^Te^{-\kappa_W(T-s)} dZ^W(s)}}{e^{\frac{9}{2}\sigma_W^2 \int_t^T e^{-2\kappa_W(T-s)ds }}} $$ 
as a Radon-Nikodym derivative. We then get 
\begin{align*}
    \mathbb{E}_t^\mathbb{Q}[\alpha W(T)^3\mathbf{1}_{\{W(T) \in [a,b]\}}]  &= \alpha e^{3\Lambda(T) + 3Y(t)e^{-\kappa_W(T-t)} + 3\theta_W(1-e^{-\kappa_W(T-t)}) } \notag \\
    &\quad \times e^{\frac{9\sigma_W^2}{4\kappa_W} (1-e^{-2\kappa_W(T-t)})}\mathbb{E}_t^\mathbb{\tilde{Q}}\bigg[\mathbf{1}_{\{W(T) \in [a,b]\}} \bigg]  \notag \\
     &= \alpha e^{3\Lambda(T) + 3Y(t)e^{-\kappa_W(T-t)} + 3\theta_W(1-e^{-\kappa_W(T-t)})}  \notag \\
    &\quad  \times e^{\frac{9\sigma_W^2}{4\kappa_W} (1-e^{-2\kappa_W(T-t)})} \tilde{\mathbb{Q}}(W(T)\in[a,b]). 
\end{align*}
Then, since $3\sigma_W e^{-\kappa_W (T-t)} dZ^W(s)$ is Gaussian, we can recover the new density under the new measure induced by the Esscher Transform $\frac{e^{\mu Xf_X(x)}}{\mathbb{E}[e^{\mu X}]}$, where $\mu=1$ and $X = \int_t^T3\sigma_W e^{-\kappa_W (T-s)} dZ^W(s)$.
We then recover
\begin{align*} 
 \tilde{\mathbb{Q}}(W(T)\in[a,b]) & = \Phi( \tilde{b}) - \Phi( \tilde{a}),
\end{align*}
where $\Phi(\cdot)$ is the Gaussian CDF under the the new measure and where
$$\tilde{a} = \frac{\ln(a) -  \Lambda_W(T) - Y(t)e^{-\kappa_W(T-t)} - \theta_W(1-e^{-\kappa_W(T-t)}) -\frac{3\sigma_W^2}{2\kappa_W}(1-e^{-2\kappa_W(T-t)})}{\sigma_W\sqrt{\frac{1-e^{-2\kappa_W(T-t)}}{2\kappa_W}}} $$
and
$$\tilde{b} =\frac{\ln(b) -  \Lambda_W(T) - Y(t)e^{-\kappa_W(T-t)} - \theta_W(1-e^{-\kappa_W(T-t)}) -\frac{3\sigma_W^2}{2\kappa_W}(1-e^{-2\kappa_W(T-t)})}{\sigma_W\sqrt{\frac{1-e^{-2\kappa_W(T-t)}}{2\kappa_W}}}. $$
We finally obtain
\begin{align}
    \mathbb{E}_t^\mathbb{Q}[\alpha W(T)^3\mathbf{1}_{\{W(T) \in [a,b]\}}]  &= \alpha e^{3\Lambda_W(T) + 3Y(t)e^{-\kappa_W(T-t)} + 3\theta_W(1-e^{-\kappa_W(T-t)})}  \notag \\
    &\quad  \times e^{\frac{9\sigma_W^2}{4\kappa_W} (1-e^{-2\kappa_W(T-t)})} ( \Phi(\tilde{b}) - \Phi(\tilde{a})). \label{eq:second_expectation}
\end{align}

We now move to the first expected value in Eq. \eqref{eq:uppa_app1}, and, by substituting in Eq. \eqref{eq:wind_sol_gauss} and \eqref{eq:electricity_sol_gauss}, we recover
\begin{align*}
    \mathbb{E}_t^\mathbb{Q}[\alpha W(T)^3 \mathbf{1}_{\{W(T) \in [a,b]\}}S(T)] &=  \alpha e^{3\Lambda_W(T) + 3Y(t)e^{-\kappa_W(T-t)} + 3\theta_W(1-e^{-\kappa_W(T-t)}) + \Lambda_S(T)} \\
    &\quad \times e^{X(t)e^{-\kappa_S(T-t)} + \theta_S(1-e^{-\kappa_S(T-t)})} \\
    & \quad \times \mathbb{E}_t^\mathbb{Q}\bigg[ \exp\bigg\{3\sigma_W\int_t^Te^{-\kappa_W(T-s)} dZ^W(s) 
    \\
    & \quad
    + \sigma_S\int_t^T e^{-\kappa_S(T-u)}dZ^S(u) \bigg\} \mathbf{1}_{\{W(T) \in [a,b]\}} \bigg]
\end{align*}

We remark that the two Brownian Motions, $Z^W$ and $Z^S$, admit the following representation
\begin{equation*}
\begin{pmatrix}
        dZ^W(s) \\
        dZ^S(s)
    \end{pmatrix} 
    = \begin{pmatrix}
        1 & 0 \\
        \rho & \sqrt{1 - \rho^2}
    \end{pmatrix}
    \begin{pmatrix}
        dB^W(s) \\
        dB^S(s)
    \end{pmatrix}
\end{equation*} 
Where $B^W(t)$ and $B^S(t)$ are independent standard Brownian Motions. We can then rewrite the conditional expected value as
\begin{align*}
    \mathbb{E}_t^\mathbb{Q}\bigg[ &\exp\bigg\{3\sigma_W\int_t^Te^{-\kappa_W(T-s)} dZ^W(s) + \sigma_S\int_t^T e^{-\kappa_S(T-u)}dZ^S(u) \bigg\} \mathbf{1}_{\{W(T) \in [a,b]\}} \bigg] = \\
    & = \mathbb{E}_t^\mathbb{Q} \bigg[ \exp\bigg\{3\sigma_W\int_t^Te^{-\kappa_W(T-s)} dB^W(s) + \sigma_S\int_t^T e^{-\kappa_S(T-u)}d(\rho B^W(u)  \\
    & \quad + \sqrt{1-\rho^2}B^S(u)) \bigg\} \mathbf{1}_{\{W(T) \in [a,b]\}} \bigg] \\
    & = \mathbb{E}_t^\mathbb{Q}\bigg[\exp\bigg\{\sigma_S\int_t^T e^{-\kappa_S(T-u)}\sqrt{1-\rho^2} dB^S(u)\bigg\} \bigg] \\
    &\quad \times  \mathbb{E}_t^\mathbb{Q}\bigg[\exp\bigg\{\int_t^T \bigg(3\sigma_We^{-\kappa_W(T-s)} + \rho\sigma_S e^{-\kappa_S(T-s)} \bigg)dB^W(s) \bigg\} \mathbf{1}_{\{W(T) \in [a,b]\}} \bigg]
\end{align*}
Since $B^S$ and $B^W$ are independent.
We then remark that
$$\mathbb{E}_t^\mathbb{Q}\bigg[\exp\bigg\{\sigma_S\int_t^T e^{-\kappa_S(T-u)}\sqrt{1-\rho^2} dB^S(u)\bigg\} \bigg] = e^{ \frac{\sigma_S^2(1-\rho^2)}{4\kappa_S}(1-e^{-2\kappa_S(T-t)})}$$ and substitute it in the above equation, yielding
\begin{align*}
    \mathbb{E}_t^\mathbb{Q}&\bigg[\exp\bigg\{3\sigma_W\int_t^Te^{-\kappa_W(T-s)} dZ^W(s) + \sigma_S\int_t^T e^{-\kappa_S(T-u)}dZ^S(u) \bigg\} \mathbf{1}_{\{W(T) \in [a,b]\}} \bigg] = \\
    & =  e^{ \frac{\sigma_S^2(1-\rho^2)}{{4}\kappa_S}(1-e^{-{2}\kappa_S(T-t)})}   \mathbb{E}_t^\mathbb{Q}\bigg[\exp\bigg\{\int_t^T \bigg(3\sigma_We^{-\kappa_W(T-s)} + \rho\sigma_S e^{-\kappa_S(T-s)} \bigg)dB^W(s) \bigg\} \\
    &\quad \times \mathbf{1}_{\{W(T) \in [a,b]\}} \bigg].
\end{align*}
We next divide and multiply by $\mathbb{E}_t^\mathbb{Q}\bigg[\exp\bigg\{\int_t^T \bigg(3\sigma_We^{-\kappa_W(T-u)} + \rho\sigma_S e^{-\kappa_S(T-u)} \bigg)dB^W(u) \bigg\} \bigg]$, in order to isolate the following Esscher transform for a measure change to a new equivalent measure $\mathbb{\bar{Q}}$, i.e.
$$\frac{d\mathbb{\bar{Q}}}{d\mathbb{Q}} \Bigg|_T=\frac{\exp\bigg\{\int_t^T \bigg(3\sigma_We^{-\kappa_W(T-u)} + \rho\sigma_S e^{-\kappa_S(T-u)} \bigg)dB^W(u) \bigg\}}{\mathbb{E}_t^\mathbb{Q}\bigg[\exp\bigg\{\int_t^T \bigg(3\sigma_We^{-\kappa_W(T-u)} + \rho\sigma_S e^{-\kappa_S(T-u)} \bigg)dB^W(u) \bigg\} \bigg]}.$$
This then yields
\begin{align*}
    \mathbb{E}_t^\mathbb{Q}\bigg[ &\exp\bigg\{3\sigma_W\int_t^Te^{-\kappa_W(T-s)} dZ^W(s) + \sigma_S\int_t^T e^{-\kappa_S(T-u)}dZ^S(u) \bigg\}  \mathbf{1}_{\{W(T) \in [a,b]\}} \bigg] = \\
    &\quad =e^{\frac{\sigma_S^2(1-\rho^2)}{4\kappa_S}(1-e^{-2\kappa_S(T-t)})} \mathbb{E}_t^\mathbb{Q}\bigg[\exp\bigg\{\int_{t}^{T} \bigg(3\sigma_We^{-\kappa_W(T-u)} + \rho\sigma_S e^{-\kappa_S(T-u)} \bigg)dB^W(u) \bigg\} \\
    & \quad \quad \times \mathbf{1}_{\{W(T) \in [a,b]\}} \bigg]   \\
    &\quad = e^{\frac{\sigma_S^2(1-\rho^2)}{4\kappa_S}(1-e^{-2\kappa_S(T-t)})} \mathbb{E}_t^\mathbb{Q}\bigg[\exp\bigg\{\int_t^T \bigg(3\sigma_We^{-\kappa_W(T-u)} + \rho\sigma_S e^{-\kappa_S(T-u)} \bigg)dB^W(u) \bigg\} \bigg] \\
    &\quad \quad \times    \mathbb{E}_t^\mathbb{Q}\bigg[\frac{\exp\bigg\{\int_t^T \bigg(3\sigma_We^{-\kappa_W(T-u)} + \rho\sigma_S e^{-\kappa_S(T-u)} \bigg)dB^W(u) \bigg\}}{\mathbb{E}_t^\mathbb{Q}\bigg[\exp\bigg\{\int_t^T \bigg(3\sigma_We^{-\kappa_W(T-u)} + \rho\sigma_S e^{-\kappa_S(T-u)} \bigg)dB^W(u) \bigg\} \bigg]} \mathbf{1}_{\{W(T) \in [a,b]\}} \bigg] 
    \\
    &\quad = e^{\frac{\sigma_S^2(1-\rho^2)}{{4}\kappa_S}(1-e^{-{2}\kappa_S(T-t)}) +\frac{9 \sigma_W^2}{4 \kappa_W}(1-e^{-2\kappa_W (T-t)}) +\frac{\rho^2 \sigma_S^2}{4\kappa_S}(1-e^{-2\kappa_S(T-t)})} \\
    & \quad \quad \times e^{+\frac{3\rho\sigma_W\sigma_S}{\kappa_W+\kappa_S}(1-e^{-(\kappa_W+\kappa_S)(T-t)})} \bar{\mathbb{Q}}_t (W(T) \in [a,b]), 
\end{align*}
by exploiting the fact that 
\begin{align*}
&\mathbb{E}_t^\mathbb{Q}\bigg[\exp\bigg\{\int_t^T \bigg(3\sigma_We^{-\kappa_W(T-u)} + \rho\sigma_S e^{-\kappa_S(T-u)} \bigg)dB^W(u) \bigg\} \bigg] = e^{ \frac{9 \sigma_W^2}{4 \kappa_W}(1-e^{-2\kappa_W (T-t)}) }\\
&\quad \times e^{\frac{\rho^2 \sigma_S^2}{4\kappa_S}(1-e^{-2\kappa_S(T-t)}) + \frac{3\rho\sigma_W\sigma_S}{\kappa_W+\kappa_S}(1-e^{-(\kappa_W+\kappa_S)(T-t)})}.
\end{align*}
With this measure change, by Girsanov's theorem, we define $\bar{B}^W_s$ as a $\mathbb{\bar{Q}}$-Brownian Motion satisfying
$$ d\bar{B}^W(t) = dB^W(t) - (3\sigma_W e^{-\kappa_W(T-t)} +\rho \sigma_S e^{-\kappa_S(T-t)}) dt.$$
We therefore have that, under $\mathbb{\bar{Q}}$, 
\begin{align*}
W(T) &= \exp\left\{\Lambda_W(T)  + Y(t)e^{-\kappa_W (T-t)} + \theta_W(1-e^{-\kappa_W (T-t)}) + \right.  \\
&\quad +  \sigma_W \int_t^T e^{-\kappa_W(T-s)} d\bar{B}^W(s)   3\sigma_W^2 \int_t^T e^{-2\kappa_W(T-s)}ds + \\
& \quad \left. + \rho \sigma_S \sigma_W \int_t^T e^{-(\kappa_W+\kappa_S)(T-s)}ds) \right\},
\end{align*}
which leads to
\begin{align*}
\bar{\mathbb{Q}}_t (W(T) \in [a,b]) &= \Phi(\bar{b})- \Phi(\bar{a}),     
\end{align*}
where 
\begin{align*}
&\bar{a} = \bigg(\ln(a) - \Lambda_W(T_j)  - Y(t)e^{-\kappa_W (T_j-t)} - \theta_W(1-e^{-\kappa_W (T_j-t)}) - \frac{3\sigma^2_W}{2\kappa_W} (1-e^{-2\kappa_W(T_j-t)})\\ 
&\quad -\frac{\rho \sigma_S \sigma_W}{\kappa_W + \kappa_S} (1-e^{-(\kappa_W+\kappa_S)(T_j-t)}) \bigg)
\bigg/ {\sigma_W  \sqrt{\frac{1}{2\kappa_W}(1- e^{-2\kappa_W(T_j-t)})}} 
\end{align*}
and
\begin{align*}
&\bar{b} = \bigg(\ln(b) - \Lambda_W(T_j)  - Y(t)e^{-\kappa_W (T_j-t)} - \theta_W(1-e^{-\kappa_W (T_j-t)}) - \frac{3\sigma^2_W}{2\kappa_W} (1-e^{-2\kappa_W(T_j-t))}) \\
&\quad -\frac{\rho \sigma_S \sigma_W}{\kappa_W + \kappa_S} (1-e^{-(\kappa_W+\kappa_S)(T_j-t)}) \bigg)
\bigg/ {\sigma_W  \sqrt{\frac{1}{2\kappa_W}(1- e^{-2\kappa_W(T_j-t)})}}.
\end{align*}

The first expected value then becomes
\begin{align}
    \mathbb{E}_t^\mathbb{Q}&[\alpha W(T)^3 \mathbf{1}_{\{W(T) \in [a,b]\}}S(T)] = \alpha e^{3\Lambda_W(T) + 3Y(t)e^{-\kappa_W(T-t)} + 3\theta_W(1-e^{-\kappa_W(T-t)}) + \Lambda_S(T)} \notag \\
    & \quad \times e^{X(t)e^{-\kappa_S(T-t)}  + \theta_S(1-e^{-\kappa_S(T-t)})+ \frac{\sigma_S^2(1-\rho^2)}{4\kappa_S}(1-e^{-2\kappa_S(T-t)}) + \frac{9 \sigma_W^2}{4 \kappa_W}(1-e^{-2\kappa_W (T-t)}) } \notag \\
    &\quad \times e^{\frac{\rho^2 \sigma_S^2}{4\kappa_S}(1-e^{-2\kappa_S(T-t)})+ \frac{3\rho\sigma_W\sigma_S}{\kappa_W+\kappa_S}(1-e^{-(\kappa_W+\kappa_S)(T-t)})} \big( \Phi(\bar{b})- \Phi(\bar{a})  \big)
    \label{eq:first_expectation} 
\end{align}
Finally, using Eq. \eqref{eq:second_expectation} and Eq. \eqref{eq:first_expectation} inside Eq. \eqref{eq:uppa_app1} leads to
\begin{align*}
    UPPA(t, T) &= D(t,T)\Big\{\mathbb{E}_t^\mathbb{Q}[\alpha W(T)^3 \mathbf{1}_{\{W(T) \in [a,b]\}}S(T)] - K \mathbb{E}_t^\mathbb{Q}[\alpha W(T)^3 \mathbf{1}_{\{W(T) \in [a,b]\}}]\Big\}\\
    &=\alpha D(t,T)\Big\{ e^{3\Lambda_W(T) + 3Y(t)e^{-\kappa_W(T-t)} + 3\theta_W(1-e^{-\kappa_W(T-t)}) + \Lambda_S(T)}  \notag \\
    &\quad \times e^{X(t)e^{-\kappa_S(T-t)}  + \theta_S(1-e^{-\kappa_S(T-t)}) + \frac{\sigma_S^2(1-\rho^2)}{4\kappa_S}(1-e^{-2\kappa_S(T-t)}) } \notag\\
    & \quad \times e^{+ \frac{9 \sigma_W^2}{4 \kappa_W}(1-e^{-2\kappa_W (T-t)}) + \frac{\rho^2 \sigma_S^2}{4\kappa_S}(1-e^{-2\kappa_S(T-t)}) + \frac{3\rho\sigma_W\sigma_S}{\kappa_W+\kappa_S}(1-e^{-(\kappa_W+\kappa_S)(T-t)})} \times \\
    & \quad \times \big( \Phi(\bar{b})- \Phi(\bar{a}) \big) + \\
    & \quad - K e^{3\Lambda_W(T) + 3Y(t)e^{-\kappa_W(T-t)} + 3\theta_W(1-e^{-\kappa_W(T-t)}) + \frac{9\sigma_W^2}{4\kappa_W} (1-e^{-2\kappa_W(T-t)}) }  \notag \\
    &\quad  \times \big( \Phi(\tilde{b}) - \Phi(\tilde{a})\big)\Big\} .
\end{align*}
which proves Eq. \eqref{eq:final_ppa_gaussian}.
\end{proof}

\subsection{PPA pricing}\label{app:ppa_gaussian_price}
\begin{proof}
Proof of Proposition \ref{prop:K_gaussian}. \\
We plug the explicit formulas for the conditional expected values obtained in Section \ref{app:ppa_gaussian_val} into Eq. \eqref{eq:price_k_of_wind_ppa} and obtain 
\begin{align}
    K &= \frac{\sum_{j=1}^n \mathbb{E}^\mathbb{Q}[\alpha W(T_j)^3 \mathbf{1}_{\{W(T_j) \in [a,b]\}}S(T_j) D(t,T_j)|\mathcal{F}_t]}{\sum_{j=1}^n \mathbb{E}^\mathbb{Q}[\alpha W(T_j)^3 \mathbf{1}_{\{W(T_j) \in [a,b]\}}D(t,T_j)|\mathcal{F}_t]} \notag \\
    &= \bigg(\sum_{j=1}^n D(t,T_j) e^{3\Lambda_W(T_j) + 3Y(t)e^{-\kappa_W(T_j-t)} + 3\theta_W(1-e^{-\kappa_W(T_j-t)}) + \Lambda_S(T_j)}  \notag \\
    &\quad \times e^{X(t)e^{-\kappa_S(T_j-t)}  + \theta_S(1-e^{-\kappa_S(T_j-t)}) +\frac{\sigma_S^2(1-\rho^2)}{4\kappa_S}(1-e^{-2\kappa_S(T_j-t)}) + \frac{9 \sigma_W^2}{4 \kappa_W}(1-e^{-2\kappa_W (T-t)})} \notag\\
    & \quad \times e^{\frac{\rho^2 \sigma_S^2}{4\kappa_S}(1-e^{-2\kappa_S(T_j-t)}) + \frac{3\rho\sigma_W\sigma_S}{\kappa_W+\kappa_S}(1-e^{-(\kappa_W+\kappa_S)(T_j-t)})} \big( \Phi(\bar{b})- \Phi(\bar{a}) \big) \bigg) \bigg/ \notag \\
    & \quad  \bigg( \sum_{j=1}^n D(t,T_j) e^{3\Lambda_W(T_j) + 3Y(t)e^{-\kappa_W(T_j-t)} + 3\theta_W(1-e^{-\kappa_W(T_j-t)}) +\frac{9\sigma_W^2}{4\kappa_W} (1-e^{-2\kappa_W(T-t)})} \notag \\
    & \quad \times \big( \Phi(\tilde{b}) - \Phi(\tilde{a})\big)\bigg). \notag
\end{align}
\end{proof}

\section{PPA Pricing with the Jump-Diffusion Ornstein-Uhlenbeck model}

\subsection{PPA Valuation}\label{app:ppa_jump_val}
\begin{proof}
Proof of Proposition \ref{prop:val_jump}.\\
Following \cite{risks6020056}, we now assume that the link between wind speed and power production is captured by the following formula 
$$Q(T,W(T)) = \alpha W(T)^3 \mathbf{1}_{\{W(T) \in [a,b]\}}.$$ 
Where $\alpha>0$ is an efficiency production factor. Then, the price of a $UPPA(t, T, W(t), S(t))$ becomes 
\begin{align}
\label{eq:step0_solution_uppa}
    UPPA(t, T, W(t), S(t)) &= 
    D(t,T)\mathbb{E}^\mathbb{Q}_t[\alpha W(T)^3\mathbf{1}_{\{W(T) \in [a,b]\}}(S(T)-K)] \notag \\
    &= \alpha D(t,T)\big\{\mathbb{E}^\mathbb{Q}_t[ W(T)^3\mathbf{1}_{\{W(T) \in [a,b]\}}S(T)] \notag \\
    &\quad - K \mathbb{E}^\mathbb{Q}_t[ W(T)^3\mathbf{1}_{\{W(T) \in [a,b]\}}]\big\} 
\end{align}
If we consider the second conditional expectated value of Eq. \eqref{eq:step0_solution_uppa}, we see that 
\begin{align*}
    \mathbb{E}_t^\mathbb{Q}[ W(T)^3\mathbf{1}_{\{W(T) \in [a,b]\}}] & = e^{3\Lambda_W(T) + 3Y(t)e^{-\kappa_W(T-t)}  + 3\theta_W(1-e^{-\kappa_W(T-t)})} \\
    &\quad  \times \mathbb{E}_t^\mathbb{Q}\bigg[ \exp\bigg\{3\sigma_W\int_t^Te^{-\kappa_W(T-s)} dZ^W(s) +  \\
    & \quad + 3\int_{t}^T e^{-\kappa_W(T-s)}dL^W(s) \bigg\} \\
    &\quad \times \mathbf{1}_{\{W(T) \in [a,b]\}} \bigg] 
\end{align*}
where the expectation cannot unfortunately be calculated explicitly. We therefore need to rely on Monte-Carlo methods or Fourier techniques.
We rewrite the expectation in the following form
\begin{align*}
    \mathbb{E}_t^\mathbb{Q}&\bigg[ \exp\bigg\{3\sigma_W\int_t^Te^{-\kappa_W(T-s)} dZ^W(s) + 3\int_t^T e^{-\kappa_W(T-s)}dL^W(s) \bigg\} \mathbf{1}_{\{W(T) \in [a,b]\}} \bigg] = \\
    &= \mathbb{E}^\mathbb{Q}_t\bigg[e^{3(U + V)}\mathbf{1}_{\{(U+V)\in[\tilde{a}, \tilde{b}] \}}\bigg] 
\end{align*}
where
\begin{align*}
    & \tilde{a} = \log(a) - (\Lambda_W(T) + Y(t) e^{-\kappa_W(T-t)} + \theta_W (1 - e^{-\kappa_W(T-t)} ))
    \\
    & \tilde{b} = \log(b) - (\Lambda_W(T) + Y(t) e^{-\kappa_W(T-t)} + \theta_W (1 - e^{-\kappa_W(T-t)} )) 
    \\
    & U = \sigma_W\int_t^Te^{-\kappa_W(T-s)} dZ^W(s)
    \\
    & V = \int_t^T e^{-\kappa_W(T-s)}dL^W(s).
\end{align*}
We then multiply and divide by $\mathbb{E}^\mathbb{Q}[e^{3(U+V)}]$, in order to isolate the  Radon-Nikodym derivative $\frac{d\tilde{\mathbb{Q}}}{d\mathbb{Q}}\Big|_{T}=\frac{e^{3(U+V)}}{\mathbb{E}^\mathbb{Q}[e^{3(U+V)}]}$, yielding
\begin{align*}
    \mathbb{E}_t^\mathbb{Q}\bigg[ \exp\bigg\{3\sigma_W\int_t^Te^{-\kappa_W(T-s)} dZ^W(s) &+ 3\int_t^T e^{-\kappa_W(T-s)}dL^W(s) \bigg\} \mathbf{1}_{\{W(T) \in [a,b]\}} \bigg] = \\
    &= \mathbb{E}^\mathbb{Q}[e^{3(U+V)}] \mathbb{E}^\mathbb{Q}_t\bigg[ \frac{e^{3(U+V)}}{\mathbb{E}^\mathbb{Q}[e^{3(U+V)}]} \mathbf{1}_{\{(U+V)\in[\tilde{a}, \tilde{b}] \}}\bigg] \\
    &= \mathbb{E}^\mathbb{Q}[e^{3(U+V)}] \tilde{\mathbb{Q}}_t\Big((U+V)\in[\tilde{a}, \tilde{b}]\Big).
\end{align*}
We recall that, by the Gil-Pelaez formula, we have that $P(X<x) = \frac{1}{2} - \frac{1}{\pi} \int_0^{+\infty} \frac{Im\{e^{-i\xi x} \tilde{\varphi}_{U+V}(\xi) \}}{\xi} d\xi$. Then, we can rewrite 
$\tilde{\mathbb{Q}}_{t}\Big((U+V)\in[\tilde{a}, \tilde{b}]\Big) = \bigg(\frac{1}{\pi} \int_0^{+\infty} \frac{Im\{e^{-i\xi \tilde{a}} \tilde{\varphi}_{U+V}(\xi) \}}{\xi} d\xi - \frac{1}{\pi} \int_0^{+\infty} \frac{Im\{e^{-i\xi \tilde{b}} \tilde{\varphi}_{U+V}(\xi) \}}{\xi} d\xi\bigg) $, with $\tilde{\varphi}_{U+V}(\xi) = \mathbb{E}_t^{\tilde{\mathbb{Q}}}[e^{i\xi(U+V)}] $ and 
\begin{align*}
    \mathbb{E}_t^{\tilde{\mathbb{Q}}}[e^{i\xi(U+V)}] &= \mathbb{E}_t^{\mathbb{Q}}\bigg[e^{i\xi(U+V)} \frac{e^{3(U+V)}}{\mathbb{E}^\mathbb{Q}[e^{3(U+V)}]} \bigg] \\
    &= \frac{1}{\mathbb{E}^\mathbb{Q}[e^{3(U+V)}]} \mathbb{E}^\mathbb{Q}[ e^{i(\xi -3i)(U+V)} ] 
    \\
    & = \frac{1}{\mathbb{E}^\mathbb{Q}[e^{3(U+V)}]} \varphi_{U+V}(\xi - 3i) \\
    &= \frac{1}{\mathbb{E}^\mathbb{Q}[e^{3(U+V)}]} \varphi_{U}(\xi - 3i)\varphi_{V}(\xi - 3i)\\
    &= \frac{\varphi_{U}(\xi - 3i)\varphi_{V}(\xi - 3i)}{\varphi_U(-3i)\varphi_V(-3i)},
\end{align*}
where $\varphi_{U}(\xi) = \mathbb{E}_t^{\mathbb{Q}}[e^{i\xi(U)}]$.\\
So, the second expected value of Eq. \eqref{eq:step0_solution_uppa} is finally rewritten as
\begin{align}
    \mathbb{E}_t^\mathbb{Q}[ W(T)^3\mathbf{1}_{\{W(T) \in [a,b]\}}&] =  e^{3\Lambda_W(T) + 3Y(t)e^{-\kappa_W(T-t)} + 3\theta_W(1-e^{-\kappa_W(T-t)})} \notag \\
    &\quad  \times \mathbb{E}_t^\mathbb{Q}\bigg[ \exp\bigg\{3\sigma_W\int_t^Te^{-\kappa_W(T-s)} dZ^W(s) + \notag\\
    &\quad + 3\int_{t}^T e^{-\kappa_W(T-s)}dL^W(s) \bigg\} \notag \\
    & \quad \times \mathbf{1}_{\{W(T) \in [a,b]\}} \bigg] \notag \\
    &= e^{3\Lambda_W(T) + 3Y(t)e^{-\kappa_W(T-t)} + 3\theta_W(1-e^{-\kappa_W(T-t)})} \notag \\
    &\quad \times \mathbb{E}^\mathbb{Q}[e^{3(U+V)}] \tilde{\mathbb{Q}}_{T}\Big((U+V)\in[\tilde{a}, \tilde{b}]\Big) \notag \\
    & = e^{3\Lambda_W(T) + 3Y(t)e^{-\kappa_W(T-t)} + 3\theta_W(1-e^{-\kappa_W(T-t)})} \mathbb{E}^\mathbb{Q}[e^{3(U+V)}] \notag \\
    &\quad \times \bigg(\frac{1}{\pi} \int_0^{+\infty} \frac{Im\{e^{-i\xi \tilde{a}} \tilde{\varphi}_{U+V}(\xi) \}}{\xi} d\xi +\notag \\
    & \quad - \frac{1}{\pi} \int_0^{+\infty} \frac{Im\{e^{-i\xi \tilde{b}} \tilde{\varphi}_{U+V}(\xi) \}}{\xi} d\xi\bigg).\label{eq:second_expected_val_appendixB}
\end{align}

A similar procedure can be repeated for the first expected value of Eq. \eqref{eq:step0_solution_uppa}. We begin by rewriting it as 
\begin{align*}
    \mathbb{E}^\mathbb{Q}[W(T)^3\mathbf{1}_{\{W(T) \in [a,b]\}}S(T)] & = e^{3\Lambda_W(T) + 3Y(t)e^{-\kappa_W(T-t)} + 3\theta_W(1-e^{-\kappa_W(T-t)}) } \\ 
    &\quad \times e^{X(t)e^{-\kappa_S(T-t)} + \theta_S(1-e^{-\kappa_S(T-t)})}  \\
    & \quad \times \mathbb{E}_t^\mathbb{Q}\bigg[ \exp\bigg\{3\sigma_W\int_t^Te^{-\kappa_W(T-s)} dZ^W(s)+ \\
    &\quad + \sigma_S\int_t^T e^{-\kappa_S(T-u)}dZ^S(u) \bigg\}   
    \\ 
    &\quad \times  \exp\bigg\{3\int_t^Te^{-\kappa_W(T-s)} dL^W(s) + \int_t^T e^{-\kappa_S(T-u)}dL^S(u) \bigg\} \\
    & \quad \times \mathbf{1}_{\{W(T) \in [a,b]\}} \bigg].
\end{align*}
We then rewrite the exponential functions inside the expected value as
\begin{align*}
    \mathbb{E}_t^\mathbb{Q}\bigg[ &\exp\bigg\{3\sigma_W\int_t^Te^{-\kappa_W(T-s)} dZ^W(s) + \sigma_S\int_t^T e^{-\kappa_S(T-u)}dZ^S(u) \bigg\}  
    \\ 
    & \quad \times \exp\bigg\{3\int_t^Te^{-\kappa_W(T-s)} dL^W(s) + \int_t^T e^{-\kappa_S(T-u)}dL^S(u) \bigg\} \mathbf{1}_{\{W(T) \in [a,b]\}} \bigg] = \\
    & \quad = \mathbb{E}_t^\mathbb{Q}\bigg[ e^{3(U+V) + A+B} \mathbf{1}_{\{(U+V)\in[\tilde{a}, \tilde{b}] \}}\bigg],
\end{align*}
where
\begin{align*}
    & \tilde{a} = \log(a) - (\Lambda_W(T) + Y(t) e^{-\kappa_W(T-t)} + \theta_W (1 - e^{-\kappa_W(T-t)} ) )
    \\
    & \tilde{b} = \log(b) - (\Lambda_W(T) + Y(t) e^{-\kappa_W(T-t)} + \theta_W (1 - e^{-\kappa_W(T-t)} ))
    \\
    & U = \sigma_W\int_t^Te^{-\kappa_W(T-s)} dZ^W(s)
    \\
    & V = \int_t^T e^{-\kappa_W(T-s)}dL^W(s)
    \\
    & A = \sigma_S\int_t^T e^{-\kappa_S(T-u)}dZ^S(u) 
    \\
    & B = \int_t^T e^{-\kappa_S(T-u)}dL^S(u).
\end{align*}
The expected value then becomes
\begin{align*}
    \mathbb{E}_t^\mathbb{Q}\bigg[e^{3(U+V) + A+B} \mathbf{1}_{\{(U+V)\in[\tilde{a}, \tilde{b}] \}} \bigg] = \mathbb{E}^\mathbb{Q}_t[e^{B}] \mathbb{E}^\mathbb{Q}_t\bigg[e^{3(U+V) + A} \mathbf{1}_{\{(U+V)\in[\tilde{a}, \tilde{b}] \}} \bigg]. 
\end{align*}
We next multiply and divide by $\mathbb{E}^\mathbb{Q}[e^{3(U+V) + A}]$ in order to isolate the Radon-Nikodym derivative
$$\frac{d\bar{\mathbb{Q}}}{d\mathbb{Q}}\bigg|_{T}=\frac{e^{3(U+V) + A}}{\mathbb{E}^\mathbb{Q}[e^{3(U+V) + A}]}$$ 
and obtain
\begin{align*}
    \mathbb{E}^\mathbb{Q}_t[e^{B}] \mathbb{E}^\mathbb{Q}_t\bigg[e^{3(U+V) + A} \mathbf{1}_{\{(U+V)\in[\tilde{a}, \tilde{b}] \}} \bigg] &= \mathbb{E}^\mathbb{Q}_t[e^{B}]\mathbb{E}^\mathbb{Q}[e^{3(U+V) + A}] \\
    & \quad \times \mathbb{E}^\mathbb{Q}_t\bigg[\mathbf{1}_{\{(U+V)\in[\tilde{a}, \tilde{b}] \}}  \frac{e^{3(U+V) + A}}{\mathbb{E}^\mathbb{Q}[e^{3(U+V) + A}]} \bigg] \\
    & = \mathbb{E}^\mathbb{Q}_t[e^{B}]\mathbb{E}^\mathbb{Q}[e^{3(U+V) + A}] \bar{\mathbb{Q}}_{T}\bigg( (U+V)\in[\tilde{a}, \tilde{b}]  \bigg) \\
    &= \mathbb{E}^\mathbb{Q}_t[e^{B}]\mathbb{E}^\mathbb{Q}[e^{3(U+V) + A}] \\
    & \quad \times \bigg(\frac{1}{\pi} \int_0^{+\infty} \frac{Im\{e^{-i\xi \tilde{a}} \Bar{\varphi}_{U+V}(\xi) \}}{\xi} d\xi +\\
    &\quad - \frac{1}{\pi} \int_0^{+\infty} \frac{Im\{e^{-i\xi \tilde{b}} \Bar{\varphi}_{U+V}(\xi) \}}{\xi} d\xi\bigg),
\end{align*}
where $\Bar{\varphi}_{U+V}(\xi) = \mathbb{E}_t^{\Bar{\mathbb{Q}}}[e^{i\xi(U+V)}]$ and
\begin{align*}
    \mathbb{E}_t^{\tilde{\mathbb{Q}}}[e^{i\xi(U+V)}] &= \mathbb{E}_t^{\mathbb{Q}}\bigg[e^{i\xi(U+V)} \frac{e^{3(U+V) + A}}{\mathbb{E}^\mathbb{Q}[e^{3(U+V) + A}]} \bigg] \\
    &= \frac{1}{\mathbb{E}^\mathbb{Q}[e^{3(U+V)+A}]} \mathbb{E}^\mathbb{Q}_{T}[ e^{i(\xi -3i)U + A} ]  \mathbb{E}^\mathbb{Q}_{T}[e^{i(\xi -3i)V}]
    \\
    & = \frac{1}{\mathbb{E}^\mathbb{Q}[e^{3(U+V) + A}]} \varphi_{U,A}(\xi - 3i, -i )\varphi_{V}(\xi - 3i)\\
    & = \frac{\varphi_{U,A}(\xi - 3i, -i )\varphi_{V}(\xi - 3i)}{\varphi_{U,A}(- 3i, -i )\varphi_{V}(- 3i)}.
\end{align*}
So, the first expected value of Eq. \eqref{eq:step0_solution_uppa} is finally rewritten as
\begin{align}
     \mathbb{E}^\mathbb{Q}[W(T)^3\mathbf{1}_{\{W(T) \in [a,b]\}}S(T)] & = e^{3\Lambda_W(T) + 3Y(t)e^{-\kappa_W(T-t)} + 3\theta_W(1-e^{-\kappa_W(T-t)})} \notag\\ 
    &\quad \times e^{X(t)e^{-\kappa_S(T-t)} + \theta_S(1-e^{-\kappa_S(T-t)})} \mathbb{E}^\mathbb{Q}_t[e^{B}]\mathbb{E}^\mathbb{Q}[e^{3(U+V) + A}] \\
    & \quad \times \bigg(\frac{1}{\pi} \int_0^{+\infty} \frac{Im\{e^{-i\xi \tilde{a}} \bar{\varphi}_{U+V}(\xi) \}}{\xi} d\xi + \notag \\
    &\quad - \frac{1}{\pi} \int_0^{+\infty} \frac{Im\{e^{-i\xi \tilde{b}} \bar{\varphi}_{U+V}(\xi) \}}{\xi} d\xi\bigg). \label{eq:first_expected_val_appendixB}
\end{align}

Inserting Eq. \eqref{eq:second_expected_val_appendixB} and \eqref{eq:first_expected_val_appendixB} 
into Eq. \eqref{eq:step0_solution_uppa} yields the final pricing formula for the UPPA
\begin{align}
    UPPA(t, T, W(t), S(t)) &= \alpha D(t,T)\big\{\mathbb{E}^\mathbb{Q}_t[ W(T)^3\mathbf{1}_{\{W(T) \in [a,b]\}}S(T)] \notag \\
    &\quad - K \mathbb{E}^\mathbb{Q}_t[ W(T)^3\mathbf{1}_{\{W(T) \in [a,b]\}}]\big\} \notag \\
    &= \alpha D(t,T)\Bigg\{e^{3\Lambda_W(T) + 3Y(t)e^{-\kappa_W(T-t)} + 3\theta_W(1-e^{-\kappa_W(T-t)})}  \notag\\ 
    &\quad \times e^{X(t)e^{-\kappa_S(T-t)} + \theta_S(1-e^{-\kappa_S(T-t)})} \mathbb{E}^\mathbb{Q}_t[e^{B}]\mathbb{E}^\mathbb{Q}[e^{3(U+V) + A}] \notag \\
    & \quad \times \bigg(\frac{1}{\pi} \int_0^{+\infty} \frac{Im\{e^{-i\xi \tilde{a}} \bar{\varphi}_{U+V}(\xi) \}}{\xi} d\xi + \notag \\
    &\quad - \frac{1}{\pi} \int_0^{+\infty} \frac{Im\{e^{-i\xi \tilde{b}} \bar{\varphi}_{U+V}(\xi) \}}{\xi} d\xi\bigg) + \notag \\
    & \quad - K \Bigg( e^{3\Lambda_W(T) + 3Y(t)e^{-\kappa_W(T-t)} + 3\theta_W(1-e^{-\kappa_W(T-t)})} \mathbb{E}^\mathbb{Q}[e^{3(U+V)}] \notag \\
    &\quad \times \bigg(\frac{1}{\pi} \int_0^{+\infty} \frac{Im\{e^{-i\xi \tilde{a}} \tilde{\varphi}_{U+V}(\xi) \}}{\xi} d\xi + \notag \\
    & \quad - \frac{1}{\pi} \int_0^{+\infty} \frac{Im\{e^{-i\xi \tilde{b}} \tilde{\varphi}_{U+V}(\xi) \}}{\xi} d\xi\bigg) \Bigg) \Bigg\}, \notag
\end{align}
where 
$$\bar{\varphi}_{U+V}(\xi) =  \frac{\varphi_{U,A}(\xi - 3i, -i )\varphi_{V}(\xi - 3i)}{\varphi_{U,A}(- 3i, -i )\varphi_{V}(- 3i)}$$
and
$$\tilde{\varphi}_{U+V}(\xi) = \frac{\varphi_{U}(\xi - 3i)\varphi_{V}(\xi - 3i)}{\varphi_U(-3i)\varphi_V(-3i)},$$
hence proving Eq. \eqref{eq:final_ppa_jump}.
\end{proof}

\subsection{PPA Pricing}\label{app:ppa_jump_price}
\begin{proof}
Proof of Proposition \ref{prop:K_jump}.\\
We plug the explicit formulas for the conditional expected values obtained in Section \ref{app:ppa_jump_val} into Eq. \eqref{eq:price_k_of_wind_ppa} and obtain 
\begin{align}
    K &= \frac{\sum_{j=1}^n \mathbb{E}^\mathbb{Q}[\alpha W(T_j)^3 \mathbf{1}_{\{W(T_j) \in [a,b]\}}S(T_j) D(t,T_j)|\mathcal{F}_t]}{\sum_{j=1}^n \mathbb{E}^\mathbb{Q}[\alpha W(T_j)^3 \mathbf{1}_{\{W(T_j) \in [a,b]\}}D(t,T_j)|\mathcal{F}_t]} \notag \\
    &=\bigg( \sum_{j=1}^n D(t, T_j)e^{3\Lambda_W(T_j) + 3Y(t)e^{-\kappa_W(T_j-t)} + 3\theta_W(1-e^{-\kappa_W(T_j-t)}) + X(t)e^{-\kappa_S(T_j-t)}}  \notag\\ 
    &\quad \times e^{\theta_S(1-e^{-\kappa_S(T_j-t)})} \mathbb{E}^\mathbb{Q}_t[e^{B}]\mathbb{E}^\mathbb{Q}[e^{3(U+V) + A}] \big(\frac{1}{\pi} \int_0^{+\infty} \frac{Im\{e^{-i\xi \tilde{a}} \bar{\varphi}_{U+V}(\xi) \}}{\xi} d\xi \notag \\
    &\quad - \frac{1}{\pi} \int_0^{+\infty} \frac{Im\{e^{-i\xi \tilde{b}} \bar{\varphi}_{U+V}(\xi) \}}{\xi} d\xi\big) \bigg) \bigg/ \notag \\
    &\quad \Bigg( \sum_{j=1}^n D(t, T_j) e^{3\Lambda_W(T_j) + 3Y(t)e^{-\kappa_W(T_j-t)} + 3\theta_W(1-e^{-\kappa_W(T_j-t)})} \mathbb{E}^\mathbb{Q}[e^{3(U+V)}] \notag \\
    &\quad \times \bigg(\frac{1}{\pi} \int_0^{+\infty} \frac{Im\{e^{-i\xi \tilde{a}} \tilde{\varphi}_{U+V}(\xi) \}}{\xi} d\xi - \frac{1}{\pi} \int_0^{+\infty} \frac{Im\{e^{-i\xi \tilde{b}} \tilde{\varphi}_{U+V}(\xi) \}}{\xi} d\xi\bigg) \Bigg)
\end{align}
\end{proof}

\end{appendix}

\end{document}